\documentclass[submission,copyright,creativecommons]{eptcs}
 % Name of the event you are submitting to
%\usepackage{breakurl}             % Not needed if you use pdflatex only.

%***************PACKAGES ADDED START ***************
%latex draw images
 \usepackage{epsfig}
 \usepackage{pst-grad} % For gradients
 \usepackage{pst-plot} % For axes

\usepackage{caption} % to split caption in multiple lines
\usepackage{graphicx} % for \resizebox etc...
\usepackage{verbatim} % used for hiding the block of text, this package includes comment environment
\usepackage{multicol}
\usepackage{amsthm, amsmath} 
\usepackage{bussproofs} % for proof trees

\usepackage{setspace}
\usepackage{chngpage}  % to adjust table margins
\usepackage[titletoc]{appendix}

\usepackage{thmtools,thm-restate} % to display theorem statement where referred

\usepackage{hyperref} %bibtex url and links
\usepackage{arydshln} % for vertical dashed lines in tables
\theoremstyle{definition}

\newtheorem{lemma}{Lemma}
\newtheorem{conjecture}{Conjecture}
\newcommand{\dff}{\stackrel{\it def}{=}}

\newtheorem{theorem}{Theorem}
\newtheorem{definition}{Definition}
\newcommand{\cond}{{\it if}}

\newcommand{\ltransition}[3]{#1 \, \stackrel{#2}{\longrightarrow}\, #3}
\newcommand{\transition}[3]{#1 \, \stackrel{#2}{\rightarrow}\, #3}
\newcommand{\ttransition}[3]{#1 \, \stackrel{#2}{\Rightarrow}\, #3}

\parskip 0pt  %reducing spaces between vertical paragraphs

%***************PACKAGES ADDED END ***************
%\setstretch{1.1}

\title{A Calculus of Mobility and Communication for Ubiquitous Computing}

\author{Nosheen Gul
\institute{Department of Computer Science, University of Leicester, England}
\email{ng90@le.ac.uk}
}

\begin{document}
\maketitle

%*********************ABSTRACT**********************************
\begin{abstract}
We propose a \emph{Calculus of Mobility and Communication (CMC)} for the modelling of mobility, communication and context-awareness in the setting of ubiquitous computing. CMC is an ambient calculus with the \emph{in} and \emph{out} capabilities of Cardelli and Gordon's Mobile Ambients. The calculus has a new form of \emph{global communication} similar to that in Milner's CCS. In CMC an ambient is tagged with a set of ports that agents executing inside the ambient are allowed to communicate on. It also has a new context-awareness feature that allows ambients to query their location. We present reduction semantics and labelled transition system semantics of CMC and prove that the semantics coincide. A new notion of behavioural equivalence is given by defining capability barbed bisimulation and congruence which is proved to coincide with barbed bisimulation congruence. The expressiveness of the calculus is illustrated by two case studies.
\end{abstract}

\section{Introduction}

Mark Weiser envisioned \cite{16mw,5mw} that \emph{ubiquitous computing} provides various computing devices available throughout the physical setting. Ubiquitous computing devices are distributed and could be mobile, and interactions among them are concurrent and often depend on the location of the devices. The idea of context-aware computing has originated in \cite{5mw}. It enables an application to adapt to the changes in its environment and location. Recent advancements in technology have made it possible to detect user's presence or position, or to detect other entities of interest to the user. Therefore, context-awareness and location-awareness have become important features of ubiquitous computing environments.  

In literature, a number of formalisms and languages have been introduced for distributed and concurrent systems. Process algebras are used to model formally concurrent systems. Structural Operational Semantics (SOS) is given as a standard approach of defining the semantics of a system by means of transition rules \cite{9M, SOS1}. Several process calculi were developed to model concurrency, communication and distributed systems, most notably CSP \cite{CSP85}, CCS \cite{9M} and ACP \cite{ACP84}. These process calculi have no primitives to describe certain aspects of behaviours of the ubiquitous computing setting, for example mobility and locations. The idea of mobile code has been formalised by Milner in $\pi$-calculus \cite{10M}. The aforementioned process calculi do not represent directly physical mobility of devices and their locations or surroundings. 

The inspiration for our work comes from several mobile ambient and process calculi that have proved useful in the modelling of mobility, communication and structure of systems. The calculus of \emph{Mobile Ambients, MA} for short, \cite{ambC} is a process calculus for modelling mobile agents over wide-area networks. In MA the ambients represent mobile, nested, computational structures with local communication. Ambients are named terms of the form $n[P]$ where $n$ is a name and $P$ a process.  

In smart indoor settings, spatial organisation is considered an important object for providing communication among various fixed and mobile structures. Despite the advances in the ubiquitous and mobile computing, it is fundamental to formally model physical mobility of devices and interactions among mobile agents that may communicate globally. Communication in such settings could be global, which means that agents may interact with subagents inside other agents. Moreover, the structures in such settings may be mobile, and may need to have knowledge of their current location and surroundings. In order to model such attributes of ubiquitous computing, this paper presents a \emph{Calculus of Mobility and Communication (CMC)}. In CMC, mobility, global communication and location-awareness are considered as first class entities. According to \cite{ambC}, MA was proposed to model mobility and locations that could not be modelled directly by other traditional calculi like Milner's \emph{Calculus of Communicating Systems (CCS)}, therefore we model locations and mobility as in Mobile Ambients. In MA, ambients  may enter or exit named ambients by their $in \; n$ and $out \; n $ capabilities. The ambient's \emph{open} capability dissolves its boundary so that the communication may take place locally. We do not use the \emph{open} capability in this paper since we introduce a new mechanism of global communication. CMC aims at adding global communication as in Milner's CCS. To achieve this we define ambients as $m_A[P]$, where $m$ is the name of the ambient, $A$ is the set of ports that $m$ is allowed to communicate on, and $P$ is an executing agent. This helps in modelling the globally communicating mobile agents in the setting of ubiquitous computing.

We develop Structural Operational Semantics for ambients’ mobility and reuse the CCS rules with an additional rule to introduce global communication. As in MA, we also show ambients’ mobility by means of a reduction semantics. For global communication, in contrast, it would be a challenge to find simple and intuitive reduction rules since (global) communication can happen via an arbitrary number of ambients that could be located far-away in the structure of a term, and it depends on whether or not all these ambients allow the communication. This is however unsound and could derive reductions matching no corresponding transitions. We also develop a new notion of behavioural equivalence for our calculus, and formulate the equivalence in terms of $\alpha$-transitions and observation predicate, inspired by \cite{ambC, BSTSA}. Thus, we define barbed bisimulation and congruence, and capability barbed bisimulation and congruence, we then prove that the respective congruence relations of the two forms of barbs imply each other. The work on behavioural equivalence is still in progress, and the recent advances in the behavioural semantics theory of mobile ambients, as in \cite{BSTSA, ConcretionMA}, should be useful.

Context-awareness is an essential paradigm of ubiquitous computing environment that makes applications adaptive with their surroundings, and enables processes to be aware of the setting in which they are being executed.
%Context-aware applications basically use location of people and computing devices as their main source of contextual information so that the personalised services are executed accordingly.
Therefore, we further extend the syntax of the calculus and add a context-awareness mechanism by introducing two capabilities to it. The new capability \emph{ploc(x).P} allows an ambient to acquire the name of its parent, whereas \emph{sloc(x).P} enquires the sibling's name of an ambient. This feature empowers ambients to have knowledge of their current location and surroundings. 

The rest of the paper is organised as follows: We introduce our basic calculus, CMC$\rm{_b}$, in Section \ref{sec:cmc} where its reduction semantics and labelled transition semantics are given. We show that the two types of semantics coincide for an appropriate sub-calculus of  CMC$\rm{_b}$. In the same section we define behavioural equivalence for the calculus. Section~\ref{sec:iHospital} presents intelligent hospital case study to illustrate the usefulness of CMC$\rm{_b}$. Section \ref{sec:ca} extends CMC$\rm{_b}$ to CMC with the \emph{ploc(x)} and \emph{sloc(x)} capabilities. We present operational semantics for the new capabilities and argue that the semantics coincide. In Section \ref{sec:sm} we give a shopping mall case study to illustrate the usefulness of CMC. Section \ref{sec:conclusion} contains conclusions.

\section{A Calculus of Mobility and Communication} \label{sec:cmc}

We introduce the syntax of the basic part of CMC, denoted by CMC$\rm{_b}$, in Tables \ref{tab:syntx} and \ref{labels}. Informally, CMC$\rm{_b}$ inherits its syntax from MA and CCS. The syntax allows global communication among ambients 
%*****************************SYNTAX OF CMC****************************
\begin{table}[h!]
\centering
$
\begin{array}{llllllllll}
\hline
Names:& m_A,n_B,k_C...\in \mathcal{N}  &  &  \\ 
Actions:& \multicolumn{3}{l}{ \alpha ,\beta,...\in Act  = \mathcal{A} \cup \overline{\mathcal{A}} \cup \lbrace \tau \rbrace }  \\ 
Variables:&  x,y,...\in \mathcal{X} &  &  \\ 
\hline 
Processes:& P,Q ::= \quad D & \vert & C.P & \vert & a(z).P & \vert & \overline{a}(x).P  & 
\\
& \qquad \quad {\vert}  \quad m_{A}[P] & \vert & P+Q  & \vert & P\mid Q  &\vert& (\nu m_A)P 
\\
& \qquad \quad {\vert} \quad (\nu l)P & {\vert} &  P[f]  
\\
Capabilities:& C ::= \qquad x & \vert & \mu & \vert & \epsilon & \vert & C.C' \\
\hline
\end{array}
$
\caption{Syntax of  CMC$\rm{_b}$} 
\label{tab:syntx}
\end{table}
$ $ \\[.5pt]
that could be mobile. We assume that $\mathcal{A}$ is an infinite set of port names, which is ranged over by $a, b, c$, and the set of co-names, denoted by $\overline{\mathcal{A}}$ is ranged over by $\overline{a}, \overline{b}, \overline{c}$. We set $\mathcal{L}=\mathcal{A} \cup \overline{\mathcal{A}}$ and let $A, B, C$ range over it. An infinite set $Act$ comprises all possible actions that an agent can perform and $\alpha, \beta$ range over it. $Act$ also includes $\tau$, which is a single completed action of composite agents. So $Act=\mathcal{L} \cup \lbrace \tau \rbrace$, and the typical subsets of $\alpha$ are $A,B$. The set of agent constants $\mathcal{K}$ is ranged over by $D$ and $E$, and the deadlocked agent 0 is a member of $\mathcal{K}$. In our syntax the variable $z$ in $a(z). P$ can be replaced by a value from a set $\mathcal{V}$, which may contain the capabilities as defined in Table \ref{tab:syntx}. 

For the mobility part of  CMC$\rm{_b}$ syntax in Table \ref{tab:syntx}, we assume an infinite set of ambient names $\mathcal{N}$ that is ranged over by $m_A,n_B$ and $k_C$, where $A,B,C \subseteq \mathcal{A} \cup \mathcal{\overline{A}}$. We define our ambient as a term $m_{A}[P]$, where, $m$ is the name of the ambient, $A$ is the set of ports that ambient $m$ is allowed to communicate on, and $P$ is an executing agent. When ambients allow communication on all visible ports then we shall write $m[P]$ instead of $m_{A}[P]$. Other ambient constructs that are inherited from \emph{MA} are $(\nu m)P$, $C.P$ and $C.C'$. An ambient restriction $(\nu m)P$ executes process $P$ with a private ambient named $m$. In  $C.P$, the process $P$ cannot start execution until the prefix capability $C$ is performed. The capability $\mu$ in Table \ref{labels} allows ambients to perform certain actions, namely $in$ and $out$, whereas $C.C'$ represents a sequence of capabilities (path) when input variable represents one or more of these capabilities. The empty path is represented by $\epsilon$.
 
We further borrow the constructs for agent constants, action prefixing, parallel composition, summation and action restriction from Milner's CCS or the $\pi$-calculus \cite{9M, 10M}. The agent constant $D$ has a unique equation of the form $D\dff P$ where $P$ is an agent that may contain agent constants. The agent constants can also be defined in terms of each other. $\overline{a}(x)$ and $a(z). P$ sends or receives a message on port $\overline{a}$ and $a$ respectively, and then execute $P$. The received message can be any value $v\in V$, and is bound to the variable $z$ in $P$. Parallel composition is given in terms of a binary operator, $P\mid Q$, and summation is given by the choice operator $P + Q$ that allows either process $P$ or process $Q$ to execute. In $(\nu l)P$ the port labels $l$ or $\overline{l}$ are restricted in $P$, where $l \in \mathcal{L}$. In a relabelling $P[f]$, $P$ is a process with the relabelling function $f$ applied to its action labels. Finally, we have the set of terms $T(\Sigma, V)$, where $V$ is the set of process variables, and $T(\Sigma)$, the set of closed terms (agents or processes) ranged over by $P, Q$.
%*****************************************************************
\subsection{Reduction Semantics of CMC$\rm{_b}$}

The reduction semantics is formalised by two concepts: the structural congruence relation, $\equiv$, and the reduction relation $\rightarrow $. We follow the definition in \cite{BSTSA}.
\begin{definition} \label{def:Cont}
A relation $ \mathcal{R}$ over processes in a process calculus is contextual, if it is preserved by all the operators in the process calculus. A relation $ \mathcal{R}$ over processes in a process calculus is p-contextual w.r.t a set of operators \emph{Op}, if it is preserved by all the operators in the set \emph{Op}.
\end{definition}

We denote the set of all names occurring free in $P$ by $\emph{fn}(P)$.

%*****************************************************************
\begin{definition} \label{def:str}
Structural congruence, $\equiv$, over CMC$\rm{_b}$ processes is the least p-contextual equivalence relation w.r.t the set of operators $\emph{Op}_1= \lbrace \nu, \mid, [f],n_B[\;], C. , \alpha. \rbrace$, where $C$ and $\alpha$ are in Tables \ref{tab:syntx} and \ref{labels}, that satisfies the following axioms:
\[
\begin{array}{llllllllll}
P\mid Q \equiv Q\mid P & \rm{(Par Comm)} & \quad A \equiv P \qquad \cond \: A \dff P   & \rm{(Const)}
\\
(P\mid Q)\mid R\equiv P\mid(Q\mid R) & \rm{(Par Assoc)} & \quad (\nu n_B )(P\mid Q) \equiv P \mid (\nu n_B)Q \quad \cond \: n_B \notin \emph{fn}(P) & \rm{(Res Par)}
\\[2pt]
P\mid 0\equiv P & \rm{(Zero Par)} &  \quad (\nu n_B )(m_A[P]) \equiv m_A[(\nu n_B) P] \quad \cond \; n \neq m & \rm{(Res Amb)}
\\
P+Q\equiv Q+P & \rm{(Sum Comm)} &  \quad(\nu n_B) (\nu  m_A) P \equiv (\nu m_A) (\nu  n_B) P & \rm{(Res Res)}
\\
(P+Q)+R \equiv P+(Q+R) & \rm{(Sum Assoc)} & \quad (\nu n_B)0\equiv 0 & \rm{(Zero Res)} \\
P+0 \equiv P & \rm{(Zero Identity )} & \quad  \epsilon.P \equiv P & \rm{(Epsilon)}
\end{array}
\]
\end{definition}

%*****************************************************************
\begin{table}
 $$
\begin{array}{llllllllll}
\hline
\emph{Ambient Prefixes} :& \mu &::=  &in \:n_B  & \vert  & out \: n_B  \\[2pt]

\emph{Action Prefixes}: &  \alpha &::= & a(z)&\vert &\overline{b}(z)& \vert &\tau \\

\emph{Ambient Actions}:& \lambda &::=  &  \emph{enter} \:n_B & \vert & move \:n_B &
\vert  &  exit \: n_B & \vert & \mu \\

\emph{Labels}:& \ell & ::=    & \mu & \vert & \alpha & {\vert}  & \lambda & \vert & \tau  \\

\emph{Outcomes}:& O &::=  & P &  \vert &  K & &
\\
\emph{Concretions}:& K &::=  & (\nu \tilde{m} ) \langle P \rangle Q &   \\
\hline
\end{array}
$$
\caption{Prefixes and labels}
\label{labels}
\end{table}
 %*****************************************************************
\begin{definition} \label{def:red}
The reduction relation, $\rightarrow $, over  CMC$\rm{_b}$ processes is the least p-contextual relation  w.r.t the set of operators $\emph{Op}_2= \lbrace \nu, \mid, n_B[\;] \rbrace $ that satisfies the rule and axioms in Table \ref{tab:redRules}.
\end{definition}
%*****************************************************************
\begin{table}[htbp]
\centering
$
\begin{array}{lllll}
\multicolumn{3}{l}{ m_A[ in \mbox{ } n_B. P \mid Q] \mid n_B[ R] \rightarrow n_B[m_A[P \mid Q]\mid R]} &  & \mbox{(Red In)}
\\
\multicolumn{3}{l}{ n_B[m_A[out \; n_B.P \mid Q]\mid R]  \rightarrow m_A[ P \mid Q] \mid n_B[ R]} & & \mbox{(Red Out)}
\\
\multicolumn{3}{l}{ P \equiv Q, \; Q \rightarrow Q', \; Q' \equiv P' \Rightarrow P \rightarrow P'} & & \mbox{(Red $\equiv$) }
\end{array}
$
\caption{Reduction axioms and rule for  CMC$\rm{_b}$}
\label{tab:redRules}
\end{table}
The axiom Red In in Table \ref{tab:redRules} shows how an ambient $m_A$ may enter into an ambient $n_B$ by the virtue of its $in \; n_B$ capability. The reduction transforms $m_{A}$, which is a sibling ambient of $n_{B}$, into a child of $n_B$. The axiom Red Out describes emigration of an ambient $m_A$ from an ambient $n_B$ by performing the $out \; n_B$ capability. The reduction transforms $m_A$, which is a child of $ n_B$, to a sibling of $n_{B}$.

%*****************************************************************

\subsection{Labelled Transition System for  CMC$\rm{_b}$}
\label{MA:SOS}

A \emph{labelled transition system (LTS)} is a tuple $(S,L, \{\transition{}{l}{}:\; l\in L\})$, where $S$ is a set of states, $L$ is a set of transition labels, and $\transition{}{l}{}$ are transition relations, one for each $l \in L$. The LTS for CMC$\rm{_b}$ is given as follows: The set of processes of CMC$\rm{_b}$ is the set of states, the set of labels $\alpha$ as in Table \ref{labels} is the set of transition labels, and the transition relations $\transition{}{\alpha}$ are defined by Plotkin's \emph{Structural Operational Semantics (SOS)} \cite{SOS1} rules in Tables \ref{rules2} and \ref{rules2:Other}. In our semantics $P \transition {}{\tau} Q$  represents not only binary communication of processes as in CCS but also mobility of ambients by means of their $in \; n_B$ and $out \; n_B$ capabilities. In order to model mobility by $\tau$-transitions additional labels and auxiliary terms are used, namely labels $\lambda$ and concretions $K$ in Table \ref{labels}. So we will need auxiliary transitions $P \transition {}{\lambda} O$, where $P$ is a process, $\lambda$ is a label and $O$ represents an outcome in Table \ref{labels}, which is either a process or \emph{concretion} of the form $(\nu \tilde{m} ) \langle P \rangle Q$ as introduced by Milner \cite{10M} and used by Merro and Hennessy  \cite{BSTSA}. We adopt the following convention after \cite{BSTSA}. If $K$ is the concretion $\nu \tilde{m}  \langle P \rangle Q$, then $\nu u K$ stands for $\nu( u \tilde{m} ) \langle P \rangle Q$, if $u \in fn(P)$, otherwise $\nu \tilde{m}  \langle P \rangle \nu u (Q)$. A similar convention is followed for $\lambda$-Par in Table \ref{rules2:Other}. We define $K \mid P'$ as the concretion $\nu \tilde{m} \langle P \rangle (Q \mid P')$ where using $\alpha$-conversion if necessary, $\tilde{m}$ is selected in such a way that $fn(P') \cap \tilde{m}=\emptyset$. 
%*************SOS RULES FOR CMC (MOBILITY)******************
\begin{table}[h!]
$\begin{array}{lll}
\mbox{(Act) } \displaystyle
 \frac{\;}{\transition{\mu.P}{\mu}{P}} &
 \mbox{(Enter) } \displaystyle
 \frac{\transition{P}{in\:n_B}{P'}}{\transition{{m_{A}[P]}}{\emph{enter}\; n_B}{\langle  m_{A}[P'] \rangle 0}}
\qquad \;
  \mbox{(Co-Enter) } \displaystyle
 \frac{\;}{\transition{n_{B}[P]}{move\; n_B}{\langle P \rangle 0 }}
   \\
\\
\multicolumn{3}{l}{
\mbox{($\tau$-In) } \displaystyle
 \frac{\transition{P}{\emph{enter}\: n_B}{(\nu \tilde{p}) \langle  P' \rangle P''}\quad \transition{Q}{move\: n_B}{(\nu \tilde{q})\langle Q' \rangle Q'' } }{\transition{{P\mid Q}}{\tau}{(\nu \tilde{p})(\nu \tilde{q})(n_{B}[ P' \mid Q'] \mid P'' \mid Q'')}}  \;  ^{(*)}
}\\
\\
\mbox{(Exit) } \displaystyle
 \frac{\transition{P}{out\: n_B}{P'}}{\transition{{m_{A}[P]}}{exit\; n_B}{\langle m_{A}[P'] \rangle 0}}  
  & 
 \qquad \mbox{($\tau$-Out) } \displaystyle
 \frac{\transition{P}{exit\: n_B}{(\nu \tilde{m})\langle P' \rangle P ''}}{\transition{{n_{B}[P]}}{\tau}{(\nu \tilde{m})(P' \mid n_{B}[P''])}} \; ^{(**)}
 	\\
 \end{array}
$
\caption{Transition rules for mobility. Conditions ($*$) and ($**$) are defined as follows:\\[4pt] ($*$) $(fn(P') \cup fn(P'')) \cap  \tilde{q}  =(fn(Q') \cup fn(Q'')) \cap  \tilde{p}  = \emptyset$ and
 ($**$)  $(fn(P') \cup fn(P'')) \cap  \tilde{m}  = \emptyset$}
\label{rules2}
\end{table}
%*****************************************************************
%*************OTHER SOS RULES OF CMC*************
\begin{table}[h!]
%\centering
$\begin{array}{lll}
	 \AxiomC{$P \transition{}{\lambda}{} O  $}
	\LeftLabel{ ($\lambda$-Par)}
	\RightLabel{ $ \; ^{(*)} $}
	\UnaryInfC{$P \mid Q \transition{}{\lambda}{} O \mid Q$}
	\DisplayProof 
	&	
\hspace{1.5cm}	\AxiomC{$P \transition{}{\lambda}{} O  $}
	\LeftLabel{($\lambda$-Res)}
	\RightLabel{ \;($u \notin fn(\lambda)) \; ^{(*)} $}
	\UnaryInfC{$(\nu u)P \transition{}{\lambda}{} (\nu u)O  $}
	\DisplayProof 
\\[5pt]
	\AxiomC{$P \transition{}{\tau}{} P'$}
	\LeftLabel{($\tau$-Amb)}
	\UnaryInfC{$\transition{n_{A}[P]}{\tau}{n_{A}[P']}$}
	\DisplayProof &
\hspace{1.5cm}   \AxiomC{$P\equiv Q \quad \transition{Q}{l}{Q'}\quad Q' \equiv P' $}
	\LeftLabel{(Struct)}
	\UnaryInfC{$\transition{P}{l}{P'}$}
	\DisplayProof
 \end{array}
$ 
\caption{Transition rules for other operators of CMC$\rm{_b}$. Condition ($*$) says that the definition of $\lambda$ is extended to include also a $\tau$.}
\label{rules2:Other}
\end{table}

Transitions $\transition{P} {\lambda} O$ are not first class transitions; they are only helpful in SOS rules that define $\tau$-transitions of processes corresponding to the movement by $in \; n_B$ and $out \; n_B$ capabilities.

Communication in  CMC$\rm{_b}$ is defined as in CCS, so in addition to the SOS rules in Tables \ref{rules2} and \ref{rules2:Other}, we have the SOS rules for CCS as in \cite{9M} (also in Appendix~\ref{Appendix:A}) and the following Global-Com rule:
\begin{prooftree}
\AxiomC{$\transition{P}{\alpha}{P'}$}
	\LeftLabel{(Global-Com)}
	\RightLabel{ $ (\alpha \not = \tau  \mbox { and } \emph{if }(\alpha=a(x)  \emph{ or }  \alpha = \overline{a}(x)  \emph{ for some $a$) then } a \in A)$}
	\UnaryInfC{$\transition{m_{A}[P]}{\alpha}{m_{A}[P']}$}
\end{prooftree}
Global-Com allows ambients to communicate globally only on ports $a \in A$. Recall that when ambients allow communication on all visible channels then we shall write $m[P]$ instead of $m_{A}[P]$. 

Next, we discuss some reductions and at the same time explain how auxiliary labels and transitions are used in defining mobility transitions. We assume $m_{A}[in \: n_B.P] \mid Q \mid n_B[R] $ for some $P$, $Q$ and $R$.

The ambient $m_{A}$, for some $A$, has the capability to enter an ambient $n_B$ for some $B$. By Red In axiom in Table \ref{tab:redRules} we have, 
$$
m_{A}[in \: n_B.P] \mid Q \mid n_B[R] \ltransition{} {}  n_B[m_{A}[P] \mid R] \mid Q.  
$$
We now derive the $\tau$-transition of $m_{A}[in \: n_B.P] \mid Q \mid n_B[R]$ by $\tau$-In rule in Table \ref{rules2}. For simplicity, we assume that there are no private names in $Q$ and $R$. We have $\ltransition {in \; n_B.P} {in \; n_B} P $. When the migration occurs, we must identify the moving ambient $m_A$, and the agent that is left behind. To model these two agents we use concretion $\nu \tilde{m}\langle P\rangle Q$, where $P$ is the agent that moves, while $Q$ is the  agent that stays behind, and $\tilde{m}$ is the set of private names shared by $P$ and $Q$. We introduce a new action $\emph{enter} \; n_B$ and have
$\ltransition{m_A[in \: n_B.P]}{\emph{enter} \; n_B} \langle m_A[P] \rangle 0$. By $\lambda$-Par in Table \ref{rules2:Other} we obtain
$
\ltransition{m_A[in \: n_B.P] \mid Q}{\emph{enter} \; n_B}  \langle m_A[P] \rangle (0 \mid Q) \equiv \langle m_A[P] \rangle Q. 
$ 

Next, to achieve the $\tau$-transition there must exist a sibling ambient $n_B$.  We define a new action $move \; n_B$ for $n_B$ to complete this interaction. By $\tau$-In we get,
\[
\ltransition{m_A[in \: n_B.P] \mid Q \mid n_B[R]}{\tau} n_B[m_A[P] \mid R] \mid Q.
\]
After the transition the ambient $m_{A}$, becomes a child of $n_{B}$. 

Next, we explain emigration capability by considering $m_A[n_B[out \; m_A.P] \mid Q]$, for some $P$ and $Q$ where $Q$ has no private names. The ambient $n_B$ may emigrate from $m_A$ by its $out \; m_A$ capability. By Red Out we have,
\[
 \transition {m_A[n_B[out \; m_A.P] \mid Q]}{} n_B[P] \mid m_A[Q].
\]

We derive the $\tau$-transition of $m_A[n_B[out \; m_A.P] \mid Q]$ by $\tau$-Out. We define a new action $exit \; m_A$, and by Exit in Table \ref{rules2} we get $\ltransition {n_B[out \; m_A.P]} {exit \; m_A} \langle n_B[P] \rangle  0$. By $\lambda$-Par we get $\ltransition {n_B[out \; m_A.P] \mid Q} {exit \; m_A} \langle n_B[P] \rangle  Q$, which shows that when this capability is exercised $n_B[P] $ moves out, while the process $Q$ remains inside $m_A$. By $\tau$-Out we have,
\[
\ltransition { m_A[n_B[out \; m_A.P] \mid Q]} {\tau}  n_B[P]  \mid m_A[Q].
\]
After the transition the ambient $n_{B}$, becomes a sibling of $m_{A}$.

%*****************************************************************
\subsection{Results}
\label{sec:results}
In this subsection we show that the LTS semantics (SOS semantics) coincides with the reduction semantics for a sub-calculus $T'$ of CMC$\rm{_b}$ that consists of all operators of CMC$\rm{_b}$ apart from the prefixing with actions (including $\tau$) operators, the choice operator and the relabelling operator. 

Since we have developed operational semantics for the mobility part of CMC, therefore we intuitively restrict equivalence between the operational semantics to a subset of the calculus, and prove the soundness and completeness of the semantics. Soundness ensures that for every reduction of a $T'$ term there is a valid $\tau$-transition of the term, and the target of the $\tau$-transition is congruent to the target of the reduction. Completeness ensures that for every valid $\tau$-transition of a $T'$ term there is a valid reduction of the term, and the targets of the $\tau$-transitions and the reductions are the same.
%*****************************************************************
\begin{restatable}{theorem}{result}
\label{theo:results}
(a)\;
$\forall P, \;P' \in T'. \; P \rightarrow P' \Longrightarrow \exists \; Q \in T'. \; \transition {P}{\tau} Q \equiv P'$.
% \item[(2)] $\forall P,P' \in T, if \; P \rightarrow P', then  {P}{\tau} Q \; \mbox{ for some } Q \in T$
\quad (b)\; $\forall P,\; R \in T'. \;\transition {P}{\tau} R \Longrightarrow P\rightarrow R$.
\end{restatable}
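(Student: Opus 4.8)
The plan is to prove both directions by rule induction: part (a) by induction on the derivation of the reduction $P \rightarrow P'$, and part (b) by induction on the derivation of the transition $\transition{P}{\tau}{R}$. The single most useful preliminary observation is that $T'$ contains no action-prefixing operator, so no $T'$ term can exhibit an $\alpha$-transition with $\alpha\neq\tau$; in particular the CCS communication rules and the Global-Com rule can never fire inside a $T'$ term. Consequently every $\tau$-transition of a $T'$ term is produced by mobility alone, i.e.\ its derivation ends in $\tau$-In or $\tau$-Out, possibly wrapped in the contextual rules $\lambda$-Par, $\lambda$-Res, $\tau$-Amb and Struct. This confines the case analysis in both directions to a handful of rules and is exactly why the coincidence can be stated for $T'$.

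For part (a) the two base cases are the axioms Red In and Red Out. For Red In I would exhibit the matching derivation explicitly: Act gives $\ltransition{in\:n_B.P}{in\:n_B}{P}$, $\lambda$-Par lifts this to $\ltransition{in\:n_B.P\mid Q}{in\:n_B}{P\mid Q}$, Enter gives $\ltransition{m_A[in\:n_B.P\mid Q]}{\emph{enter}\;n_B}{\langle m_A[P\mid Q]\rangle 0}$, Co-Enter supplies $\ltransition{n_B[R]}{move\;n_B}{\langle R\rangle 0}$, and $\tau$-In combines the two concretions into $\transition{m_A[in\:n_B.P\mid Q]\mid n_B[R]}{\tau}{n_B[m_A[P\mid Q]\mid R]}$ after discarding the inert $0$s up to $\equiv$. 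Red Out is symmetric, using Exit and $\tau$-Out. For the inductive cases, the contextual closure of $\rightarrow$ with respect to $\mathit{Op}_2=\{\nu,\mid,n_B[\;]\}$ is matched step for step by $\lambda$-Res, $\lambda$-Par and $\tau$-Amb (all of which carry $\tau$ by the extension noted in Table \ref{rules2:Other}), while the Red $\equiv$ closure is matched by Struct; since Struct also closes $\tau$-transitions under $\equiv$ on the target, the witness $Q$ may be taken equal to $P'$, yielding the stated $Q\equiv P'$.

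For part (b) the inductive cases are immediate and in fact drive the result. If the last rule is $\lambda$-Par, $\lambda$-Res or $\tau$-Amb, the premise is a strictly smaller $\tau$-transition whose reduct reduces by the induction hypothesis; applying the corresponding operator of $\mathit{Op}_2$ (parallel, restriction, ambient, all in the p-contextual closure defining $\rightarrow$) reconstructs $P\rightarrow R$. If the last rule is Struct, the induction hypothesis on the middle transition together with Red $\equiv$ gives $P\rightarrow R$ directly. The base cases $\tau$-In and $\tau$-Out are where the real work lies: there I must recover from the auxiliary premises the syntactic shape of the source and show it matches the left-hand side of Red In (resp.\ Red Out).

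The main obstacle, therefore, is a family of inversion (characterisation) lemmas for the auxiliary labels. Concretely I would prove: that $\transition{P}{\emph{enter}\;n_B}{K}$ forces $K\equiv(\nu\tilde r)\langle m_A[P_2]\rangle P_3$ with $P\equiv(\nu\tilde r)(m_A[in\:n_B.P_1\mid P_2']\mid P_3)$, where the internal $in\:n_B$-transition is itself invertible to an Act-redex by a subsidiary $\mu$-inversion lemma; that $\transition{Q}{move\;n_B}{K'}$ forces $Q\equiv(\nu\tilde q)(n_B[R]\mid Q'')$ via Co-Enter; and analogously for $exit\;n_B$ through Exit. Feeding these shapes into $\tau$-In (resp.\ $\tau$-Out) and re-assembling the restrictions according to the scope-extrusion convention $\nu u K$ reproduces exactly the Red In (resp.\ Red Out) redex, so the reconstructed reduct is $\equiv R$; Red $\equiv$ then upgrades this to the literal $P\rightarrow R$ demanded by the statement. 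The delicate points throughout are the bookkeeping of the private name sets $\tilde p,\tilde q,\tilde m$ under the side conditions $(*)$ and $(**)$ and the binary case split in the $\nu u K$ convention, which is precisely what must be controlled to ensure no spurious reductions are introduced — the soundness concern the paper flags for global communication and the reason the coincidence is restricted to $T'$.
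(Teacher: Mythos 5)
Your proposal is correct and follows essentially the same route as the paper: transition (rule) induction in both directions, with part (a) handled by directly exhibiting the matching derivations and part (b) reduced to inversion lemmas characterising the shape of terms admitting $\emph{enter}\;n_B$, $move\;n_B$ and $exit\;n_B$ transitions, which are precisely the paper's Lemmas~\ref{lemma:enter} and~\ref{lemma:move} (plus the analogous exit lemma). Your preliminary observation that $T'$ terms admit no communication transitions, so every $\tau$-step is a mobility step, is left implicit in the paper but is exactly the reason the coincidence is stated for $T'$.
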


\begin{proof}
By transition induction where we consider cases of reductions or transitions of terms depending on the structure of the terms. The proof of part \emph{(a)} is straightforward, whereas to show part \emph{(b)} a number of auxiliary lemmas are required, similarly as in \cite{BSTSA}. Given a transition $\transition {P} {\lambda} O $, where $\lambda$ is as in Table \ref{labels}, these lemmas state the structure of terms $P$ and $O$. For example, for ambient entering capability we require the following lemmas:
%if $\lambda = exit \; n_B$ then the lemma for $\lambda$ is:
%\begin{lemma}
%If $ \transition {P}{exit \; n_B } \nu \tilde{m} \; \langle P' \rangle P''$ then  $P \equiv \nu \tilde{m} \; (k_{A}[out \; n_B.P_{1} \mid P_{2}] \mid P_{3}), \; P' \equiv k_{A}[P_{1} \mid P_2] \mbox{ and } P'' \equiv P_3$, for some $  P_{1}, P_{2}, P_{3}, \; k_A \mbox{ with } n_B \not \in  \tilde{m}$, where $\tilde{m}$ is a set of private ambient names in $P$.
%\end{lemma}
\begin{lemma} \label{lemma:enter}
If $ \transition {P}{enter \; n_B } \nu \tilde{p} \; \langle P' \rangle P''$ then  $P \equiv \nu \tilde{p} \; (k_{A}[in \; n_B.P_{1} \mid P_{2}] \mid P_{3}), \; P' \equiv k_{A}[P_{1} \mid P_2] \mbox{ and } P'' \equiv P_3$, for some $  P_{1}, P_{2}, P_{3}, \; k_A \mbox{ with } n_B \not \in  \tilde{p}$, where $\tilde{p}$ is a set of private ambient names in $P$.
\end{lemma}
\begin{lemma} \label{lemma:move}
If $ \transition {Q}{move \; n_B } \nu \tilde{q} \; \langle Q' \rangle Q''$ then  $Q \equiv \nu \tilde{q} \; (n_B[Q_1] \mid Q_2), \; Q' \equiv Q_1 \mbox{ and } Q'' \equiv Q_2$, for some $  Q_{1}, Q_{2},$ with $ n_B \not \in  \tilde{q}$, where $\tilde{q}$ is a set of private ambient names in $Q$.
\end{lemma}

The detailed proof of Theorem \ref{theo:results} is given in \cite{Gul} (Section 4.2).
\end{proof}

%*****************Equivalences*******************************

\subsection{Behavioural Semantics} \label{sec:behSeman}

We develop an appropriate notion of behavioural equivalence for $\rm{CMC_b}$. All processes and context mentioned in this section are from our calculus $\rm{CMC_b}$. We formulate the equivalence in terms of $\alpha$-transitions ($\xrightarrow{\alpha}$), for $\alpha \in  a(z),\overline{b}(z), in\; m_A, out \; m_A, \tau$, for all $a,b,m,A$, and observation predicate as in \cite{ambC, BSTSA}. We write $P\downarrow_{n_A}$ to denote the presence of ambient $n_A$ at the top level, in the other words process $P$ may interact with the environment via $n_A$. We write $P\Downarrow_{n_A}$, if after some number of $\tau$-transitions, the process $P$ exhibits $n_A$ at the top level.

\begin{definition} \textbf{(Barbs)}
\\
%We write $P\downarrow_{n_A}$ to denote the presence of ambient $n_A$ at the top level and process $P$ may interact with the environment via $n_A$. We write $P\Downarrow_{n_A}$, if after some number of $\tau$-transitions, the process $P$ exhibits $n_A$ at the top level. More formally,
%\\[4pt]
$
 P\downarrow{_{n_A}} \dff P \equiv \nu \tilde{m} (n_A[P_1] \mid P_2), \mbox{ where } n_A \not \in \tilde{m} \mbox{ for some} P_1, P_2 
$ and 
\\
$
 P\Downarrow{_{n_A}} \dff  \ttransition{P}{\hat{\tau}} Q \mbox{ and }
Q \downarrow_{n_A} \mbox{ for some } Q.
$
\end{definition}

%A relation $R $ over process $P, Q$ is barb preserving if it is preserved by observation predicates, namely if $P$ may interact with environment via ambient $n_A$ then $Q$ may also interact via the ambient $n_A$ after a number of $\tau$-transitions. Observation predicates of the two process $P,Q$ are invariant under any contexts $\mathcal{C}$[$\cdot$].           
 
\begin{definition} \textbf{(Barb Preserving)}
\\
A relation $R$ over processes is said to be \emph{barb preserving} if $P \; R \; Q$ and $P \downarrow_{n_A} $ implies $Q \Downarrow_{n_A}$.
\end{definition}

\begin{definition} \textbf{(Context)}\\
A \emph{context} $\mathcal{C}$[$\cdot$] is a process with zero or more holes [$\cdot$]. A hole [$\cdot$] in a context $\mathcal{C}$ is replaced by at most one occurrence of a process. A context $\mathcal{C}$[$\cdot$] with a hole [$\cdot$] replaced by a process $P$ is denoted by $\mathcal{C}$[$P$].
\end{definition}

\begin{definition} \textbf{(Contextual Equivalence)} \\
Processes $P$, $Q$ are \emph{contextual equivalent}, denoted by $P \simeq Q$, if for all contexts $\mathcal{C}[\cdot]$ and ambient names $n_A$, $\mathcal{C}[P]\downarrow_{ n_A}$ implies $\mathcal{C}[Q] \Downarrow_{ n_A }$.
\end{definition}

Since we are considering weak equivalence, we provide the notion of weak actions as follows. We write $\alpha \in Act$ (recall that $\tau \in Act$) . 
We write $\Rightarrow$ for the reflexive and transitive closure of $ \overset{\tau}{\rightarrow}  $, where $ \overset{\tau}{\rightarrow}  $ specifies exactly the $\tau$-transition. $ \overset{\tau}{\Rightarrow}  $ specifies at least a $\tau$ transition.  $\hat{\alpha}$ is a sequence obtained by deleting all occurrences of $\tau$ actions, note that $\hat{\tau}=\epsilon$. Furthermore, $ \overset{\hat{\tau}}{\Rightarrow}  $ is $\overset{\epsilon}{\Rightarrow}$, an empty sequence of $\tau$-transitions, and $ \overset{\hat{\alpha}}{\Rightarrow}  $ is $\overset{\alpha}{\Rightarrow}$, for $\alpha \not = \tau$. 

We define two forms of barbs; one at ambient level whereas another for ambients capabilities. They give rise to (a) barbed bisimulation and congruence, and (b) capability barbed bisimulation and congruence. We then show that the respective congruence relations imply each other. %Two processes are barbed congruent if when they are placed into any context then the context processes are barbed bisimilar.

\begin{definition} \textbf{(Barbed Bisimulation and  Congruence)} \label{def:bbc}
\\
A relation $S$ is a \emph{barbed bisimulation}, if it is symmetric and if ($P,Q) \in S$ then for all $\alpha \in \lbrace a(z),\overline{b}(z), \\ in\; m_A, out \; m_A \rbrace$,
\begin{itemize}
\item[-] if $\transition{P} {\alpha} P'$ then $\ttransition{Q} {\widehat{\alpha}} Q'$ and ($P',Q') \in S$;
\item[-] if $P \downarrow_{n_A} $ then $Q \Downarrow_{n_A}$.
\end{itemize}
Processes $P$ and $Q$ are barbed bisimilar, $P \approx Q$, if $(P,Q) \in S$ for some barbed bisimulation $S$. $P$ and $Q$ are barbed congruent, $P \cong Q$, if for all contexts $\mathcal{C}[\cdot]$, $\mathcal{C}[P] \approx \mathcal{C}[Q]$.
\end{definition}

\begin{definition} \label{def:cap}
We write $P \downarrow_{\beta}$ if $\ltransition{P} {\beta} P'$ for some $P'$, where $\beta \in \lbrace in \; n_A, out \; n_A, enter \; n_A, move \; n_A, exit \; n_A \rbrace $. We write $P\Downarrow_\beta$ if $P \xrightarrow{\tau^*}P' \xrightarrow{\beta}P''$ for some $P' $ and $ P''$. 
\end{definition}

%We now define $\beta$-barb bisimulation, where barb congruence between two process remains invariant when they are placed into any context.

\begin{definition} \textbf{(Capability Barbed Bisimulation)}
\\
Let $L= \lbrace in \; n_A, out \; n_A, enter \; n_A, move \; n_A, exit \; n_A \rbrace$, and let $\beta \in L$. A relation $R$ is a \emph{$\beta$-barbed bisimulation}, if $R$ is symmetric and if ($P,Q) \in R$ then for all $\alpha \in \lbrace a(z),\overline{b}(z),in \; n_A, out \; n_A \rbrace$:

\begin{itemize}
\item[-] if $\transition{P} {\alpha} P'$ then $\ttransition{Q} {\widehat{\alpha}} Q'$ and ($P',Q') \in R$;
\item[-] if $P \downarrow_\beta $ then $Q \Downarrow_\beta$.
\end{itemize}
$P$ and $Q$ are $\beta$-barbed bisimilar, $P \approx_\beta Q$, if $(P,Q) \in R$ for some $\beta$-barbed bisimulation $R$. $P$ and $Q$ are barbed congruent, $P \cong_\beta Q$, if for all contexts $\mathcal{C}[\cdot]$, $\mathcal{C}[P] \approx_\beta \mathcal{C}[Q]$.
\end{definition}

%*********************Lemmas needed to prove upcoming Theorem***********
We now prove that two congruence relations, namely barbed bisimulation congruence and capability barbed bisimulation congruence imply each other.

\begin{theorem} \label{theo:equiv}
Let $P,Q$ $\in$ CMC$\rm{_b}$. Then, $ P \cong Q$  iff  $P \cong_\beta Q$ for all $n_B$.
\end{theorem}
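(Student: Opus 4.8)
The plan is to exploit the fact that the two bisimulations $\approx$ and $\approx_\beta$ impose \emph{identical} transition‑matching requirements: both quantify over exactly $\alpha\in\lbrace a(z),\overline{b}(z),in\;n_A,out\;n_A\rbrace$ and inherit the same reduction ($\tau$) closure from \cite{BSTSA}; they differ \emph{only} in the barb they preserve. The whole theorem therefore reduces to showing that the ambient barbs $\downarrow_{n_A}$ and the capability barbs $\downarrow_\beta$ are mutually observable in every context. Since both $\cong$ and $\cong_\beta$ are congruences, I would prove each implication by checking that the bisimulation witnessing one congruence already satisfies the defining clauses of the other: the transition clause is inherited for free, so only the relevant barb must be transferred, and the transfer may invoke an auxiliary \emph{detecting context} together with context‑closure.

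The technical core is a barb‑correspondence lemma: for every $P$ and every $n_A$, $P\downarrow_{n_A}$ iff $P\downarrow_{move\;n_A}$, and likewise for the weak barbs. The right‑to‑left direction is immediate from the Co‑Enter rule together with $\lambda$‑Par and $\lambda$‑Res, while left‑to‑right is exactly the structural characterisation of $move$‑transitions given by Lemma~\ref{lemma:move}. This identifies the ambient barbs with the $move$ capability barbs once and for all, trivialising the $\beta=move\;n_A$ case in both directions. Two further barbs come for free: $\downarrow_{in\;n_A}$ and $\downarrow_{out\;n_A}$ are by definition just the existence of an $in\;n_A$‑ or $out\;n_A$‑transition, and since both relations already match these labels (with $\widehat{in\;n_A}=in\;n_A$), preservation of the $in$‑ and $out$‑barbs is an immediate consequence of the common transition clause.

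The remaining cases, $enter\;n_A$ and $exit\;n_A$, are needed only for the forward implication $\cong\Rightarrow\cong_\beta$, and this is where I expect the main obstacle. For $exit\;n_A$ I would use the context $\mathcal{C}=n_A[\,\cdot\,]$: by the exit analogue of Lemma~\ref{lemma:enter}, if $P\downarrow_{exit\;n_A}$ then the exit‑ready sub‑ambient escapes to the top level of $n_A[P]$ in one reduction, turning the capability barb into an ordinary ambient barb that $\approx$ must preserve. The $enter\;n_A$ barb is the genuinely hard part: an $enter$ capability produces no positive barb in isolation, and because communication in $\mathrm{CMC_b}$ is global and there is no $open$ capability, no context can make the \emph{arrival} of a child ambient directly visible. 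I would instead detect it dynamically, placing $P$ in $[\,\cdot\,]\mid n_A[0]$ to supply a $move\;n_A$ partner, so that $P\downarrow_{enter\;n_A}$ becomes a real $\tau$‑step (by $\tau$‑In) that relocates the entering ambient into $n_A$ and thereby \emph{removes} its name from the top level; the bisimulation game must match this commitment, and a process lacking the $enter\;n_A$ capability cannot reproduce the resulting ambient‑barb profile. Turning this into a reflection of $\downarrow_{enter\;n_A}$ onto ambient barbs that is sound, complete, and uniform over all contexts, using Lemmas~\ref{lemma:enter} and \ref{lemma:move}, is the crux of the proof.

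With the correspondences in hand the two implications close cleanly. For $\cong\Rightarrow\cong_\beta$ I would verify that $\cong$ is itself a $\beta$‑barbed bisimulation for each $\beta$ (symmetry and the transition clause hold since $\cong\subseteq\approx$; $\beta$‑barb preservation follows directly for $in,out,move$ and via the detecting contexts plus context‑closure of $\cong$ for $enter,exit$), whence congruence yields $\mathcal{C}[P]\approx_\beta\mathcal{C}[Q]$ for all $\mathcal{C}$. For the converse I would show that $T=\bigcap_\beta\cong_\beta$, again a symmetric congruence, is contained in $\approx$: ambient‑barb preservation comes from $T\subseteq\cong_{move\;n_A}$ and the correspondence, and this direction needs neither the $enter$ nor the $exit$ case. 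The one delicate point here is that the weak transition witnessing membership in each individual $\approx_\beta$ must be reconciled into a \emph{single} derivative lying in $T$; I would resolve this by re‑probing the derivative with all $\beta$‑observations through contexts, using that $T$ is context‑closed. This simultaneous‑matching step for the backward direction, together with the $enter$‑detection for the forward one, is where the real work lies; everything else is a routine transition induction on the structure of $P$.
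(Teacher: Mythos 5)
Your proposal covers the case the paper actually proves, but by a genuinely different and more elementary route --- and note that the paper proves strictly less than you attempt: its proof of Theorem~\ref{theo:equiv} treats only $\beta = move\;n_B$, and immediately afterwards the remaining capabilities $enter\;n_B$ and $exit\;n_B$ are explicitly left as a conjecture. Your central observation --- that $P\downarrow_{n_A}$ and $P\downarrow_{move\;n_A}$ coincide outright at the \emph{same} name (Co-Enter plus $\lambda$-Par, $\lambda$-Res, Struct in one direction; Lemma~\ref{lemma:move} in the other), so the ambient barb and the $move$ capability barb are literally the same observable --- is sound, and it makes both implications for the $move$ case nearly immediate given the shared transition clause. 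The paper never states this identification; instead it routes both directions through detecting contexts built on global communication, $\mathcal{C}_1[\cdot]=\nu m_A([\cdot])\mid\nu a (k_C[in\;n_B.out\;n_B.\overline{a}.0]\mid a.m_A[P])$ and a symmetric $\mathcal{C}_2$, with Lemmas~\ref{sublemma1} and~\ref{sublemma2} converting a $move\;n_B$ barb into a \emph{fresh} ambient barb $m_A$ and vice versa. That heavier machinery buys something your argument does not: in the paper's right-to-left direction, ambient barbs at arbitrary names are recovered from $\cong_{move\;n_B}$ at a single fresh $n_B$, whereas your correspondence needs $\cong_{move\;m_A}$ at every name $m_A$; the contextual technique is also the only plausible candidate for the open $enter$/$exit$ cases, since those capability barbs have no same-name structural counterpart. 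Conversely, your route is shorter and needs no freshness bookkeeping; your point that $in$/$out$ barbs are preserved for free by the transition clause is correct; and your flagging of the derivative-reconciliation problem in the right-to-left direction (different $\beta$'s may force different weak derivatives) identifies a real gap that the paper silently ignores --- the paper proves only barb transfer and never exhibits a bisimulation relation (one caveat: state your forward direction as ``every barbed bisimulation is a $move$-barbed bisimulation'' rather than ``$\cong$ is itself a $\beta$-barbed bisimulation,'' since derivatives of congruent processes are only bisimilar). Finally, your admission that detecting $enter\;n_A$ is the unfinished crux is not a defect relative to the paper: it is exactly the point at which the paper retreats to a conjecture, and for the reason you give --- entering yields no positive barb, only barb disappearance, which weak barbed bisimulation cannot observe.
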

\begin{proof}
%The only difference between the two forms of the barbs is the level at which each barb is defined, namely the definition at ambient level and the definition at the capability level. We show that the two forms of barbs imply each other.
We consider a case where $\beta = move \; n_{B}$, and show that $P \cong Q$ implies $ P \cong_{move \; n_B} Q$ for all $P,Q$ and $n_B$.
\\
Assume that $P \cong Q$  and $P \Downarrow_{move \; n_B}$, and we will show $Q \Downarrow_{move \; n_B}$.
We define a context $\mathcal{C}_1$[$\cdot$] as follows:
\\
$ \mathcal{C}_1[\cdot] \dff \nu m_A([\cdot]) \mid \nu a (k_C[in \; n_B.out \; n_B. \overline{a}.0] \mid a.m_A[P]), \mbox{ with }a \not \in B \mbox{ and } a \in C $

%This context allows the ambient $m_A$ to interact with the environment after the communication on $a$ has happened. Also, $k_C$ may execute its $in \; n_B$ and $out \; n_B$ capabilities only if $n_B$ exists in parallel with $k_C$. Therefore, any process replacing the context hole [$\cdot$] must contain $n_B$. Then $k_C$, after executing its capabilities, may communicate on $a$, which enables the context $\mathcal{C}_1$~[$\cdot$] to interact with the environment via $m_A$. 

Global communication is very useful in the definition of context $\mathcal{C}_1[\cdot]$. It acts as a guard and the context may interact with the environment via corresponding guarded ambient if the guard is satisfied. 

Before we continue with proof of Theorem~\ref{theo:equiv}, we shall need the following lemma.

%*************LEMMA FOR LEFT TO RIGHT IMPLICATION OF UPCOMING THEOREM****
%Before we continue with proof of Theorem~\ref{theo:equiv}, we shall need the following lemma.

\begin{lemma} \label{sublemma1}
For $m_A$ and $k_C$ fresh in an agent $R$, $R\Downarrow_{move \; n_B}$ iff $\mathcal{C}_1[R] \Downarrow _{m_A}$.
\end{lemma}

\begin{proof}
We show,
$
R\Downarrow_{move \; n_B} \mbox{ implies } \mathcal{C}_1[R] \Downarrow _{m_A}
$
\\
By Definition \ref{def:cap},
 $R\Downarrow_{move \; n_B}$ \mbox{implies} $\transition{R} {\tau^*}{R'} \xrightarrow {move \; n_B}R''$ for some $R',R''$. Since $R\Downarrow_{move \; n_B}$ is valid, we obtain $R \xrightarrow{\tau*}R'\xrightarrow{move\; n_B} R''$.

We consider ${R'} \xrightarrow {move \; n_B}R''$. By Lemma \ref{lemma:move},
$\mbox{ if } {R'} \xrightarrow {move \; n_B} \nu \tilde{r} \langle Q' \rangle Q''    \mbox{ then }  R'\equiv \nu \tilde{r} (n_B[R_1] \mid R_2) \mbox{ and } R'' \equiv \nu \tilde{r} \langle Q' \rangle Q'',  \mbox{ where } Q' \equiv R_1 \mbox{ and } Q'' \equiv R_2$.
We now have,
$$
\begin{array}{lll}
\mathcal{C}_1[R'] & \equiv  \mathcal{C}_1[\nu \tilde{r} (n_B[R_1] \mid R_2)] \equiv  \nu m_A( \nu \tilde{r} (n_B[R_1] \mid R_2) ) \mid \nu a (k_C[in \; n_B.out \; n_B. \overline{a}.0] \mid a.m_A[P]) \\[4pt]
% & \nu m_A( \nu \tilde{r} (n_B[R_1] \mid R_2) ) \mid \nu a (k_C[in \; n_B.out \; n_B. \overline{a}.0] \mid a.m_A[P]) \\
\end{array}
$$
Since by $(*)$ in $\tau$-In in Table \ref{rules2}, the members of $\tilde{r}$ are not free names in $ \nu a (k_C[in \; n_B.out \; n_B. \overline{a}.0] \mid a.m_A[P])$, and $a \not \in  \emph{fn} (\nu m_A(\nu \tilde{r} (n_B[R_1] \mid R_2)))$, the process $\mathcal{C}_1[\nu \tilde{r} (n_B[R_1] \mid R_2)]$ executes as follows
\[
\begin{array}{lll}
 \xrightarrow{\tau} & \nu a \nu \tilde{r} (\nu m_A(n_B[k_C[out \; n_B. \overline{a}.0] \mid R_1] \mid R_2) \mid  a.m_A[P]),
\\
  & (k_C \not = m_A  \mbox{ and }  k_C \not \in \tilde{r} ) \; \mbox{ and } \; ( a \not \in \emph{fn} (R_2) \mbox{ and }  \tilde{r} \cap \emph{fn} (P)= \emptyset) &  \mbox{($\tau$-In)} 
\\
 \xrightarrow{\tau} & \nu a \nu \tilde{r} ( \nu m_A(n_B[R_1] \mid R_2 \mid k_C[\overline{a}.0] ) \mid   a.m_A[P]) & \mbox{($\tau$-Out)} \\
 \xrightarrow{\tau} & \nu a \nu \tilde{r} ( \nu m_A(n_B[R_1] \mid R_2 \mid k_C[0]) \mid m_A[P]) &  \mbox{(Global-Com)}  \\
\end{array}
\]
We need to show $\mathcal{C}_1[R]\Downarrow_{m_A}$ which by our predicate definition, means $\mathcal{C}_1[R] \xrightarrow{\tau^*} \mathcal{C}_1[R'] \downarrow_{m_A},$ and 
$\mathcal{C}_1[R']\downarrow_{m_A} \mbox{ means }\mathcal{C}_1[R'] \equiv \nu \tilde{m}(m_A[P_1] \mid P_2)$ for some $P_1,P_2, \tilde{m}$.
When $P_2 \equiv \nu m_A(n_B[R_1] \mid R_2 \mid k_C[0]), \; m_A[P_1] \equiv m_A[P]  $ and $ \tilde{m} \equiv \nu a \nu \tilde{r}$, then we obtain $\mathcal{C}_1[R'] \equiv \nu a \nu \tilde{r} (\nu m_A(n_B[R_1] \mid R_2 \mid k_C[0]) \mid m_A[P])$, which implies $\mathcal{C}_1[R']\downarrow_{m_A}$.
\\
Since $ R \xrightarrow{\tau^*} R'$ we obtain $\mathcal{C}_1[R] \xrightarrow {\tau^*} \mathcal{C}_1[R']$. Since  $\mathcal{C}_1[R] \xrightarrow {\tau^*} \mathcal{C}_1[R']$ and $\mathcal{C}_1[R']{\downarrow_{m_A}}$, we obtain $\mathcal{C}_1[R] \Downarrow_{m_A}$ as required.

We now show,
$
 \mathcal{C}_1[R] \Downarrow _{m_A} \mbox{ implies } R\Downarrow_{move \; n_B}
$
\\
Since $\mathcal{C}_1[R] \Downarrow_{m_A} \mbox{ means } \mathcal{C}_1[R] \xrightarrow{\tau^*} \mathcal{C}_1[R']\downarrow_{m_A}$ for some $R'$, we have
$$
 \mathcal{C}_1[R] \equiv \nu m_A(R) \mid \nu a (k_C[in \; n_B.out \; n_B. \overline{a}.0] \mid a.m_A[P]) 
$$
Here, $\mathcal{C}_1[R]$ may interact with the environment via the ambient $m_A$ only if, after some $\tau$-transitions, $m_A$ exists at the top level. Therefore, $R \xrightarrow{\tau *} R'\downarrow_{n_B}$ and we obtain
$$
 \nu m_A(R') \mid \nu a (k_C[in \; n_B.out \; n_B. \overline{a}.0] \mid a.m_A[P])
$$
Since we define predicate ($R' \downarrow_{n_B}$) as
$
R' \downarrow_{n_B} \dff R' \equiv \nu \tilde{q}(n_B[Q_1] \mid Q_2)  \mbox{ for some } Q_1, Q_2, \mbox{and } n_B \not \in \tilde{q},
$
we obtain
$$
 \nu m_A(R') \mid \nu a (k_C[in \; n_B.out \; n_B. \overline{a}.0] \mid a.m_A[P]) \xrightarrow{\tau *} \nu m_A(R') \mid \nu a (k_C[0] \mid m_A[P]).
$$
Since after a number of $\tau$-transitions we have $m_A$ at the top level of context $\mathcal{C}_1$, so $ \mathcal{C}_1[R']$ may interact with environment via $m_A$ and we obtain $\mathcal{C}_1[R'] \downarrow_{m_A}$.
Since $\mathcal{C}_1[R] \xrightarrow{\tau ^*} \mathcal{C}_1[R'] \mbox{ and } \mathcal{C}_1[R'] \downarrow_{m_A}$, we obtain $\mathcal{C}_1[R] \Downarrow_{m_A}$.
\\
Since we have $R' \equiv \nu \tilde{q}(n_B[Q_1] \mid Q_2)$, we show $\xrightarrow{move \; n_B}$ as follows:

\begin{prooftree}
\AxiomC{}
\LeftLabel{Co-Enter}
\UnaryInfC{$\ltransition{n_B[Q_1]} {move \; n_B} \langle n_B[Q_1] \rangle 0 $}
\LeftLabel{$\lambda$-Par}
\UnaryInfC{$\ltransition{n_B[Q_1] \mid Q_2} {move \; n_B} \langle n_B[Q_1] \rangle (0 \mid Q_2) $}
\LeftLabel{$\lambda$-Res}
%\RightLabel{$fn(Q) \not \in \tilde{q}$}
\UnaryInfC{$\ltransition{\nu \tilde{q} ( n_B[Q_1] \mid Q_2)} {move \; n_B}  \nu \tilde{q} \langle n_B[Q_1] \rangle (0 \mid Q_2) \equiv \nu \tilde{q} \langle n_B[Q_1] \rangle  Q_2  $}
\LeftLabel{Struct}
\UnaryInfC{$\ltransition{\nu \tilde{q} ( n_B[Q_1] \mid Q_2)} {move \; n_B}  \nu \tilde{q} \langle n_B[Q_1] \rangle  Q_2 $}
\end{prooftree}
\end{proof}
%*************END OF LEMMA FOR LEFT TO RIGHT PART OF THEOREM**********
 
Now we return to the proof of Theorem~\ref{theo:equiv}.
Since $P \Downarrow_{move \; n_{B}}$ we get, by Lemma~\ref{sublemma1}, $\mathcal{C}_1[P] \Downarrow_{m_A}$. Since $P \cong Q$, we obtain $\mathcal{C}_1[P] \cong \mathcal{C}_1[Q]$, for context $\mathcal{C}_1$[$\cdot$]. Then since $\mathcal{C}_1[P] \cong \mathcal{C}_1[Q] $, $\mathcal{C}_1[P] \Downarrow_{m_A}$ gives us $\mathcal{C}_1[Q] \Downarrow_{m_A}$. Finally, by Lemma~\ref{sublemma1}, $\mathcal{C}_1[Q] \Downarrow_{m_A}$ implies $Q \Downarrow_{move \; n_{B}}$ as required.

Next, we show the right to left implication, namely
$
P \cong_{move \; n_B} Q \Rightarrow P \cong Q \mbox{ for all } P,Q.
$
Assume that $P \cong_{move \; n_B} Q$  and $P \downarrow_{m_A}$, and we will show $Q \Downarrow_{m_A}$.
We define the context $\mathcal{C}_2$[$\cdot$] as follows:\\
$
\mathcal{C}_2[\cdot] \dff \nu n_B([\cdot]) \mid \nu a (k_C[in \; m_A.out \; m_A. \overline{a}.0] \mid a.n_B[P]), \mbox{ with } a \not \in A \mbox{ and } a \in C.
$
%**********BEGIN LEMMA FOR RIGHT TO LEFT IMPLICATION OF THE THEOREM****
\begin{lemma} \label{sublemma2} For $k_C$ and $n_B$ fresh in an agent $R$, $R\Downarrow_{m_A}$  iff  $\mathcal{C}_1[R] \Downarrow _{move \; n_{B}}$.
\end{lemma}

\begin{proof}
Since the proof is very similar to the proof of Lemma~\ref{sublemma1} it is omitted.
\end{proof}
%*************END OF LEMMA FOR RIGHT TO LEFT PART OF THEOREM**********

%Now we return to the proof of right to left implication of  Theorem~\ref{theo:equiv}.
Since $P \Downarrow_{m_A}$ Lemma~\ref{sublemma2} gives us $\mathcal{C}_2[P] \Downarrow_{move \; n_{B}}$. Since $P \cong_{move \; n_{B}} Q$, we obtain $\mathcal{C}_2[P] \cong_{move \; n_{B}} \mathcal{C}_2[Q]$ for context $\mathcal{C}_2$[$\cdot$]. Next, since $\mathcal{C}_2[P] \cong_{move \; n_{B}} \mathcal{C}_2[Q] $, $\mathcal{C}_2[P] \Downarrow_{move \; n_{B}}$ gives us $\mathcal{C}_2[Q] \Downarrow_{move \; n_{B}}$. Hence, by Lemma~\ref{sublemma2}, $\mathcal{C}_2[Q] \Downarrow_{move \; n_{B}}$ implies $Q \Downarrow_{m_A}$ as required.
\end{proof}

\begin{conjecture} We conjecture that Theorem \ref{theo:equiv} will hold for the other capabilities, namely $enter \; n_B$ and $ exit \; n_B$ of CMC$\rm{_b}$.
\end{conjecture}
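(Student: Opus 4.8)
The plan is to replay the architecture of the proof of Theorem~\ref{theo:equiv} case by case for the two outstanding capabilities $\beta \in \{enter\;n_B,\,exit\;n_B\}$. For each $\beta$ and each direction I would (i) record a structural lemma fixing the shape of every $R$ with $\transition{R}{\beta}{O}$ --- for $enter\;n_B$ this is already Lemma~\ref{lemma:enter}, and for $exit\;n_B$ I would prove the evident $out$-analogue stating that $\transition{R}{exit\;n_B}{\nu\tilde{p}\,\langle m_C[P_1\mid P_2] \rangle P_3}$ forces $R \equiv \nu\tilde{p}\,(m_C[out\;n_B.P_1 \mid P_2]\mid P_3)$ with $n_B\notin\tilde{p}$; (ii) build a detector context $\mathcal{C}_\beta[\cdot]$ that, using global communication as a guard, exposes a fresh ambient $m_A$ at top level exactly when the plugged agent can perform $\beta$; and (iii) prove the two characterisation lemmas $R \Downarrow_\beta \Leftrightarrow \mathcal{C}_\beta[R] \Downarrow_{m_A}$ in the style of Lemmas~\ref{sublemma1} and \ref{sublemma2}. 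Given these, both implications of the extended theorem follow exactly as before, by combining the characterisation with $P\cong Q$ (resp. $P\cong_\beta Q$) and contextuality of the respective congruences.

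The detectors must reflect the orientation of each capability. Since $R\downarrow_{enter\;n_B}$ means, by Lemma~\ref{lemma:enter}, that $R$ carries an ambient $k_A[in\;n_B.P_1\mid P_2]$ poised to move, the context has to furnish the missing co-move partner, so a first candidate is
\[
\mathcal{C}_3[\cdot] \dff \nu m_A([\cdot]) \mid \nu a\,(n_B[\,\overline{a}.0\,] \mid a.m_A[P]), \qquad a\in B,\; a\in C,
\]
in which the context's own $n_B$ synchronises with the $in\;n_B$ of $R$ via $\tau$-In, after which $\overline{a}$ is meant to release $m_A[P]$. Dually, $R\downarrow_{exit\;n_B}$ places a top-level $out\;n_B$-ambient in $R$, so the hole is enclosed in a context-supplied $n_B$,
\[
\mathcal{C}_4[\cdot] \dff \nu m_A\big(n_B[\,[\cdot]\,]\big) \mid \nu a\,(a.m_A[P]),
\]
and $\tau$-Out ejects the ambient to top level, where the guard should fire. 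The two bi-implication lemmas would then be discharged by the same two-column $\tau$-transition calculations as in Lemma~\ref{sublemma1}, using the structural lemma of step (i) for the harder, right-to-left halves.

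The main obstacle is precisely step (ii), and the candidates above already expose it. In the $move\;n_B$ proof the probe $k_C[in\;n_B.out\;n_B.\overline{a}.0]$ is owned by the context end-to-end: it enters the agent's passive $n_B$, exits it, and only then emits $\overline{a}$, so the signal is both faithful and impossible to trigger before the interaction. For $enter\;n_B$ and $exit\;n_B$ the active capability instead belongs to the plugged agent, and after the forced $\tau$-In or $\tau$-Out the moved ambient's continuation $P_1$ is arbitrary and need emit nothing; worse, in $\mathcal{C}_3$ the signal $\overline{a}.0$ sits inside $n_B$ from the outset and can fire before any entry, so as written the detector reports $\Downarrow_{m_A}$ for every $R$ and completeness fails. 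Repairing this requires a guard enabled by the mobility step itself rather than pre-armed --- for instance wrapping a context-controlled probe around the entered (resp. ejected) ambient so that the context, not the agent, performs the post-move emission, or else abandoning a single direct detector and arguing globally that any pair separated by an $enter\;n_B$ or $exit\;n_B$ barb is already separated by $\cong$ through the differing observable structure of the reducts reached after the forced step. Proving that some such guard is simultaneously sound (never triggered spuriously by unrelated reductions of $R$) and complete (always triggered when $\beta$ is available) is the delicate core of the argument, and is exactly why the claim is stated only as a conjecture.
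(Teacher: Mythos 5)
First, a point of order: the paper offers no proof of this statement at all --- it is explicitly left as a conjecture --- so there is no reference argument for your proposal to match or diverge from. Judged on its own terms, your proposal is a faithful replay of the architecture of the proof of Theorem~\ref{theo:equiv} (a structural lemma in the style of Lemma~\ref{lemma:enter}, a detector context with a global-communication guard, and two characterisation lemmas in the style of Lemmas~\ref{sublemma1} and~\ref{sublemma2}), and your self-diagnosis of where it breaks is accurate. But be clear that what you have written is not a proof: step (ii) is the entire content of the theorem, and both of your candidate detectors fail it, as you yourself observe. In $\mathcal{C}_3$ the emission $\overline{a}.0$ sits inside the context's own $n_B$ with $a \in B$, so by Global-Com it can synchronise with $a.m_A[P]$ immediately, before and independently of any $\tau$-In step; hence $\mathcal{C}_3[R] \Downarrow_{m_A}$ holds for every plugged agent $R$ and the completeness direction of the characterisation ($\mathcal{C}_3[R]\Downarrow_{m_A} \Rightarrow R \Downarrow_{enter\;n_B}$) is false. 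In $\mathcal{C}_4$ the dual defect appears: after $\tau$-Out ejects the agent's ambient, nothing is scheduled to emit on $a$ at all (the ejected ambient's continuation is arbitrary and the context controls no process that is unblocked by the move), so the guard never fires and the soundness direction ($R\Downarrow_{exit\;n_B} \Rightarrow \mathcal{C}_4[R]\Downarrow_{m_A}$) is false.

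The deeper point, which your last paragraph gestures at but does not resolve, is that this failure is structural rather than a matter of tuning the contexts. For $move\;n_B$ the paper's probe $k_C[in\;n_B.out\;n_B.\overline{a}.0]$ works because the context owns the entire mobility interaction: the prefix $in\;n_B.out\;n_B$ can be consumed only if the agent exhibits a passive $n_B$, and the signal $\overline{a}$ is sequentially guarded behind that consumption. For $enter\;n_B$ and $exit\;n_B$ the active prefix belongs to the unknown plugged agent, whose ambient name and continuation are arbitrary, so the context cannot sequence its signal after the move; indeed, no choice of port membership helps, since whether $a \in B$ or not, the ability of the context's guard to fire is completely decoupled from whether the entry or exit actually occurs. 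In CMC$_{\rm b}$ there is no co-capability, password, or sibling query with which a context-supplied $n_B$ could observe that something has entered or left it --- precisely the observability problem that led the authors of \cite{BSTSA} to introduce co-actions and passwords, and presumably why this case is stated only as a conjecture. Your proposal correctly locates the obstruction but does not overcome it, so it should be read as an explanation of why the statement is open, not as a proof of it.
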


%The use of global communication in CMC is fundamental in proving the above results.
%*****************************************************************
\section{Intelligent Hospital Case Study} \label{sec:iHospital}

This case study illustrates the usefulness of CMC$\rm{_b}$ in the given problem domain. Agents' mobility and global communication features are modelled in a scenario where services follow mobile ambients, and server supplies services globally to  appropriate device provided that the receiving ambient is at the same location as the device. 

We consider a hospital building and a doctor who moves around the building and helps patients. While dealing with patients, he may need to use information displayed on screens that are fixed around the building. We assume that an independent server communicates globally with the doctor and with the screens around the building. The purpose of this case study is to model services following the doctor around the building, more specifically to ensure that information is shown only on the screens of the rooms where the doctor is present.

An ambient $k$ represents the building. The ambient $k$ contains ambients $\emph{dr}_{K}$ and $w_{L}$ which represent the doctor's room and the ward respectively. $K$ and $L$ are sets of communication ports, where $b, c_{1} \in {K}$ and $b, c_{2} \in {L}$. This means that the ambient $\emph{dr}_{K}$ can communicate at least on ports $a$ and $c_1$ and the ambient $w_{L}$ can communicate at least on ports $a$ and $c_2$. Furthermore, there are two fixed screens $\emph{scr}_{A_{1}}$ and $\emph{scr}_{A_{2}}$ in $\emph{dr}_K$ and $w_L$ respectively. $A_{1}$ and $A_{2}$ are the sets of communication ports, where $c_{1} \in A_{1}$ and $c_{2} \in A_{2}$, but $c_{1} \not \in A_{2}$ and $c_{2} \not \in A_{1}$. Finally, the doctor is represented as an ambient $d_{B}$ for some $B$ with $a,b \in B$.

Initially, the ambient $d_{B}$ is in the doctor's room $\emph{dr}_{K}$, he then moves to the ward $w_{L}$ and starts using services on the screen $\emph{scr}_{A_{2}}$. The graphical representation of our setting is given in Figure \ref{fig:hospital}. The ambients are represented by boxes, whereas dashed lines represent the communication channels. Next, we define our agents as follows:
%***********************INTELLIGENT HOSPITAL FIGURE***************************
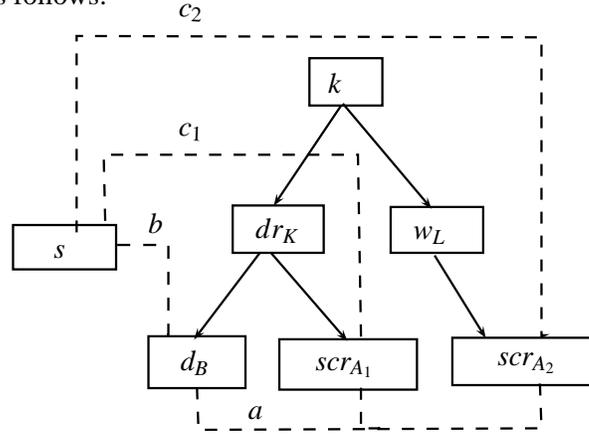
\begin{figure}[htbp]
\centering
\begin{pspicture}(0,-2.8740487)(8.671719,2.35)
\psline[linewidth=0.03cm](8.616718,0.28095123)(8.656718,0.30095124)
\psframe[linewidth=0.03,dimen=outer](4.94,2.1009512)(3.94,1.4409512)
\psframe[linewidth=0.03,dimen=outer](6.28,0.120951235)(5.02,-0.51904875)
\psframe[linewidth=0.03,dimen=outer](4.16,0.14095123)(2.92,-0.51904875)
\psframe[linewidth=0.03,dimen=outer](1.4,-0.11904877)(0.0,-0.7390488)
\psframe[linewidth=0.03,dimen=outer](5.4,-1.6390488)(3.54,-2.3390489)
\psframe[linewidth=0.03,dimen=outer](3.12,-1.5990487)(1.8,-2.319049)
\psframe[linewidth=0.03,dimen=outer](7.82,-1.6190487)(5.84,-2.2790487)
\usefont{T1}{ptm}{m}{n}
\rput(4.2928123,1.7509513){$k$}
\usefont{T1}{ptm}{m}{n}
\rput(0.6328125,-0.48904878){$s$}
\usefont{T1}{ptm}{m}{n}
\rput(3.5328124,-0.20904876){$dr_K$}
\usefont{T1}{ptm}{m}{n}
\rput(5.5528126,-0.26904875){$w_L$}
\usefont{T1}{ptm}{m}{n}
\rput(2.4328125,-1.9690487){$d_B$}
\usefont{T1}{ptm}{m}{n}
\rput(4.4328127,-2.0290487){$scr_{A_{1}}$}
\usefont{T1}{ptm}{m}{n}
\rput(6.852813,-1.9690487){$scr_{A_{2}}$}
\psline[linewidth=0.03cm,arrowsize=0.05291667cm 2.0,arrowlength=1.4,arrowinset=0.4]{->}(4.38,1.4609512)(3.48,0.120951235)
\psline[linewidth=0.03cm,arrowsize=0.05291667cm 2.0,arrowlength=1.4,arrowinset=0.4]{->}(4.4,1.4809512)(5.58,0.080951236)
\psline[linewidth=0.03cm,arrowsize=0.05291667cm 2.0,arrowlength=1.4,arrowinset=0.4]{->}(5.62,-0.5390488)(6.3,-1.6390488)
\psline[linewidth=0.03cm,arrowsize=0.05291667cm 2.0,arrowlength=1.4,arrowinset=0.4]{->}(3.3,-0.49904877)(2.42,-1.6590488)
\psline[linewidth=0.03cm,arrowsize=0.05291667cm 2.0,arrowlength=1.4,arrowinset=0.4]{->}(3.44,-0.49904877)(4.46,-1.6790488)
\psline[linewidth=0.03,linestyle=dashed,dash=0.16cm 0.16cm](1.4,-0.39904878)(2.08,-0.39904878)(2.08,-1.5990487)(2.06,-1.5990487)
\psline[linewidth=0.03,linestyle=dashed,dash=0.16cm 0.16cm](0.86,-0.23166138)(0.86,1.8223927)(0.86,1.8223927)(0.86,2.3809512)(7.04,2.3629332)(7.04,-1.6190487)(7.08,-1.6010307)(7.04,-1.6010307)
\usefont{T1}{ptm}{m}{n}
\rput{-1.8343781}(-0.033701703,0.077399366){\rput(2.3814063,1.1109512){$c_1$}}
\usefont{T1}{ptm}{m}{n}
\rput{-2.699406}(-0.123142436,0.116016835){\rput(2.3814063,2.6909513){$c_2$}}
\usefont{T1}{ptm}{m}{n}
\rput(3.2414062,-2.6290488){$a$}
\usefont{T1}{ptm}{m}{n}
\rput(1.9114063,-0.10904876){$b$}
\psline[linewidth=0.03,linestyle=dashed,dash=0.16cm 0.16cm](1.24,-0.07904877)(1.22,0.80095124)(4.6,0.80095124)(4.64,-1.5790488)(4.64,-1.6190487)
\psline[linewidth=0.03,linestyle=dashed,dash=0.16cm 0.16cm](2.46,-2.2900832)(2.48,-2.8590488)(7.04,-2.838359)(7.02,-2.2590487)(7.0,-2.2590487)
\psline[linewidth=0.03,linestyle=dashed,dash=0.16cm 0.16cm](4.64,-2.2990487)(4.64,-2.8390489)(4.88,-2.8390489)(4.88,-2.8390489)(4.88,-2.819049)
\end{pspicture}
\caption{Intelligent Hospital setting}
\label{fig:hospital}
\end{figure}

%*****************************************************************

Agents \emph{Server} and $S$ are given below, where $l$ is a finite sequence of values, $v_{1}, v_{2},...,v_{k}$, for some $k$:
\begin{center}
$
\begin{array}{llll}
Server (v:l) & \dff & b(x).  \emph{ if } (x=dr_{K} \emph{ then } \overline{c_{1}}(v).Server(l) \qquad %(|l|<n-1)
 \\
& & \emph{ else if } x = w_{L} \; then \; \overline{c_{2}}(v).Server (l) 
   \emph{ else } Server (v:l)) &
\\
 Server(\epsilon) & \dff & 0 \qquad \qquad  S  \; \dff \; s[Server(l)]
 \\
\end{array}
$
\end{center}
%The ambient $s$ outputs the information to the screen around the building on port $c_{m}$ for some $m>0$. We assume that all the screens in the building have unique port to communicate on with the server. Server receives location on port $b$, then sends the information $v$ to the appropriate room.
Agents $\emph{Screen}_m$ and $Scr_{m}$ for $m \in \lbrace 1,2 \rbrace$, are defined as follows:
\begin{center}
$
\begin{array}{llllll}
Screen_{m}& \dff c_{m} (x). \overline{a}(x). Screen _{m} & \quad &
Scr_{m} & \dff &scr_{A_{m}}[Screen_{m}] \\
\end{array}
$
\end{center}
The agent $Scr_{A_{m}}$ receives an input $x$ from the server on $c_{m}$ and outputs $x$ on $a$. Since $a \in B$, the agent $Doc$, defined below, is able to view $x$ via port $a$.

Finally, we define agents \emph{Doctor} and \emph{Doc} as follows:
\begin{center}
$
\begin{array}{lllll}
Doctor(p, l) & \dff& \overline{b}(p). a(x). Doctor(p,x:l) \\
&& + out\: p. \overline{b}(k). in\: r. \overline{b}(r). a(x). Doctor(r,l) && p,r  \in \lbrace dr_K, w_L \rbrace \emph{ and } r \not = p
\\
Doc & \dff& d_{B}[Doctor(dr_K, \epsilon                                            )]
\end{array}
$
\end{center}
We use $p$ to represent the initial location of \emph{Doc}, here $p=dr_K$. When \emph{Doc} leaves $p$ by performing $out \; p$ capability, his new location becomes $k$. He now may enter $r$ by $in \; r$, and send his location to \emph{Server}. In this particular situation, $r=w_L$ since $r \not = p$ and $p=dr_K$.

The Intelligent Hospital system is represented by the parallel composition of the server and the building, which contains doctor's room, ward, the doctor and two screens:  
\begin{center}
$
S  \mid  k[dr_K[Doc  \mid  Scr_1 ]  \mid  w_L[Scr_2] \; ] 
$
\end{center}

For simplicity we assume that the server $S$ sends only a single piece of information, namely $l=v:\epsilon$ for some $v$. Initially \emph{Doc} is in $\emph{dr}_K$ and $S$ wants to send the value $v$ to \emph{Doc} via either $\emph{Scr}_1$ or $\emph{Scr}_2$. There are two possible sequences of execution of the Intelligent Hospital system. These sequences are:

\[
\begin{array}{llll}

 (i) & \ltransition{} {\tau_{b(dr)}} \ltransition{} {\tau_{c_1(v)}}\ltransition{} {\tau_{a(v)}} & \qquad
 (ii) &\ltransition{} {\tau_{out}} \ltransition{} {\tau_{b(k)}} \ltransition{} {\tau_{in}} \ltransition{} {\tau_{b(w)}} \ltransition{} {\tau_{c_2(v)}} \ltransition{} {\tau_{a(v)}} \\
 
\end{array}
 \]
In the first sequence, \emph{Doc} sends its location $\emph{dr}_K$ to $S$ on port $b$, the server in response sends $v$ to $\emph{Scr}_1$ on port $c_1$, and then \emph{Doc} views $v$ via port $a$. These interactions are indicated by appropriate labels that annotate the $\tau s$ of this sequence. In the second case, \emph{Doc} leaves the \emph{dr}$_K$ and enters the ward by its $out\; dr_K$ and $in \; w_L$ capabilities. It sends its current location to $S$ on port $b$ after executing every move capability. The server in response sends $v$ to the $\emph{Scr}_2$ on port $c_2$, and then the screen displays $v$ to $Doc$ on port $a$.

%**********************Context-Awareness-START***********************************

\section{Adding Context-Awareness} \label{sec:ca}

In this section we extend the calculus even further by adding a context-awareness mechanism. In smart indoor settings, location is considered an important entity for providing communication among various portable and static structures. We consider location as one of the most typical forms of context, and propose a location-awareness feature, by introducing new constructs $ploc(x)$ and $sloc(x)$, that query an ambient's parent and sibling names respectively.

We add \emph{ploc(x)} and \emph{sloc(x)} to CMC$\rm{_b}$, finally giving our full calculus CMC. The definition of $\mu$ in Table \ref{labels} is extended to include further \emph{ploc(x)} and \emph{sloc(x)}. Also, the definition of $\lambda$ in Table \ref{labels} is extended to include further auxiliary labels \emph{ploc1(z)}, \emph{sloc1(z)} and \emph{amb $n_B$}.

\subsection{Reduction Semantics for CMC}

The reduction semantics of CMC is given in terms of the structural congruence relation, $\equiv$, and the reduction relation, $\rightarrow $. The axioms for $ploc(x)$ and $sloc(x)$ are given in Table \ref{tab:RedCA}. 

\begin{table}[htbp]
\centering
$
\begin{array}{lllll}
\multicolumn{3}{l}{ m_A[n_B[ploc(x).P \mid Q] \mid R] \rightarrow m_A[n_B[P \lbrace x \leftarrow m_A \rbrace \mid Q] \mid R]} & & \mbox{(Red Ploc) } 
\\
\multicolumn{3}{l}{ m_A[P] \mid n_B[sloc(x).Q \mid S] \rightarrow m_A[P] \mid n_B[Q  \lbrace x \leftarrow m_A \rbrace \mid S]} & & \mbox{(Red Sloc) }
\end{array}
$
\caption{Reduction axioms for \emph{ploc} and \emph{sloc}}
\label{tab:RedCA}
\end{table}

Structural congruence, $\equiv$, for CMC processes is as in Section \ref{def:str} where capabilities $C$ include additionally \emph{ploc(x)} and \emph{sloc(x)}. The reduction relation, $\rightarrow $, for CMC processes is as in Definition~\ref{def:red} except that it satisfies additionally the axioms in Table \ref{tab:RedCA}.

\subsection{SOS Semantics for ploc and sloc}

The SOS rules for $ploc(x)$ and $sloc(x)$ in Tables \ref{tab:PlocSOS}, \ref{tab:SlocSOS} and \ref{rules2:Other}. As before, we use concretions in our rules. We illustrate reductions and transitions associated with the \emph{ploc(x)} capability by considering $m_A[n_B[ploc(x).P_1 \mid P_2] \mid Q]$, where $n_B$ is a child of $m_A$ and $P_1$, $P_2$ and $Q$ are some processes. We assume for simplicity that $P_2$ and $Q$ have no private names. The construct $ploc(x)$ enables $n_B$ to find out the name of its parent (here $m_A$) and pass it to $P$ via $x$. By the reduction rule Red Ploc we get
$
m_A[n_B[ploc(x).P_1 \mid P_2] \mid Q]  \longrightarrow   m_A[n_B[P_1 \lbrace x \leftarrow m_A\rbrace \mid P_2] \mid Q],
$
where $P_1 \lbrace x \leftarrow m_A \rbrace $ denotes process $P_1$ with all occurrences of $x$ replaced by $m_A$.

Now we show how to use $\tau$-Ploc in Table \ref{tab:PlocSOS}. The $\tau$-Ploc rule uses the notion of \emph{lookahead} as, for example, in \cite{Ulidowski92}. In order to derive a $\tau$-transition of $m_A[P]$ we need to ensure that $P$ contains an ambient enquiring parent's name. This is achieved by  $\ltransition{P} {ploc1(z)}  \nu \tilde{p} \langle P' \rangle Q$ where $P'$ contains this ambient. The agent $P'$ then perform $ploc(z)$ to substitute the parent's name: $\ltransition{P'} {ploc(z)} P''$. Hence $P'$ is used both on the right-hand side and on the left-hand side of the premises in $\tau$-Ploc, so $\tau$-ploc has a lookahead.

 Now, to derive the $\tau$-transition of $m_A[n_B[ploc(x).P_1 \mid P_2] \mid Q] $ we must identify the ambient enquiring parent's name. To achieve this we introduce a new action \emph{ploc1(z)} and by Ploc1 we obtain $ \ltransition {n_B[ploc(x).P_1 \mid P_2]} {ploc1(z)} \langle n_B[ploc(x).P_1 \mid P_2] \rangle 0$. By Par-Ploc1 we have
\begin{equation} \label{PlocPrem1}
\tag {A} \ltransition {n_B[ploc(x).P_1 \mid P_2] \mid Q} {ploc1(z)} \langle n_B[ploc(x).P_1 \mid P_2] \rangle ( 0 \mid Q) \mbox{, where }  z \not \in fn(Q) 
\end{equation}
Transition \ref{PlocPrem1} matches the first premise of $\tau$-Ploc for the agent $m_A[n_B[ploc(x).P_1 \mid P_2] \mid Q]$. \\
Now, $n_B[ploc(x).P_1 \mid P_2]$ must be able to perform the capability \emph{ploc(x)}, thus giving the right-hand side premise of $\tau$-Ploc:

\begin{equation} \label{PlocPrem2}
\tag{B} \ltransition {n_B[ploc(x).P_1 \mid P_2] } {ploc(z)}  n_B[P_1 \lbrace x \leftarrow z \rbrace \mid P_2] \mbox{, where }  z \not \in fn(P_2)
\end{equation}
Since we have \ref{PlocPrem1} and \ref{PlocPrem2}, by $\tau$-Ploc we obtain

$
\begin{array}{llll}
 m_A[n_B[ploc(x).P_1 \mid P_2] \mid Q] & \transition{} {\tau} &  m_A[n_B[P_1 \lbrace x \leftarrow z \rbrace \mid P_2] \lbrace z \leftarrow m_A \rbrace \mid 0 \mid Q]\\[4pt]
  & \equiv &  m_A[n_B[P_1 \lbrace x \leftarrow z \rbrace \mid P_2] \lbrace z \leftarrow m_A \rbrace \mid Q]
\end{array}
$
\\
Since $z$ does not appear free in $P_2$ by rules for substitution, the transition is as required:
$
\ltransition {  m_A[n_B[ploc(x).P_1 \mid P_2] \mid Q]} {\tau} m_A[n_B[P_1 \lbrace x \leftarrow m_A \rbrace \mid P_2]  \mid Q]
$
%**********PLOC***********
\begin{table}[t!]
\begin{tabular}{llll} 
\multicolumn{4}{l}{      
        \AxiomC{}
        \LeftLabel{(Act-Ploc)}
         \RightLabel{ \;($z$ doesn't appear in $P$)}
        \UnaryInfC{$\ltransition{ploc(x) . P }{ploc(z)}P \lbrace x\leftarrow z \rbrace $} \DisplayProof } \\[4pt]
        
	 \AxiomC{}
	\LeftLabel{(Ploc1)}
	\UnaryInfC{$ \ltransition{n_B[P]}{ploc1(z)} \langle n_B[P] \rangle 0$}
	\DisplayProof 
	&   
		\AxiomC{$ \ltransition{P}{ploc1(z)}{ \nu \tilde{p} \langle P' \rangle Q \qquad \ltransition{P'} {ploc(z) } P''}$ }
	\LeftLabel{ ($\tau$-Ploc)}
	\RightLabel{$^{(**)}$}
    \UnaryInfC{$\transition{m_A[P]} {\tau} ( \nu \tilde{p} ) m_{A}[P'' \lbrace z \leftarrow m_A \rbrace \mid Q ])$ }
        \DisplayProof       	
 \end{tabular}
\caption{SOS rules for \emph{ploc}. Condition ($**$) is as in Table \ref{rules2}} 
\label{tab:PlocSOS}
\end{table}
%*****************************************************************

We now consider the correspondence of the reduction semantics and the operational semantics for CMC. Let $T'''$ be  a sub-calculus of CMC that consists of all operators of CMC apart from the prefixing with actions (including $\tau$) operators, the choice operator and the relabelling operator. We easily have the soundness part of this correspondence between the two semantics:

\begin{theorem} \label{theo:CMCpp}
$\forall P, \;P' \in T'''. \; P \rightarrow P' \Longrightarrow \exists \; Q \in T'''. \; \transition {P}{\tau} Q \equiv P'$.
\end{theorem}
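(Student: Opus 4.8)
The plan is to prove the statement by a straightforward transition induction on the derivation of the reduction $P \rightarrow P'$, exploiting the inductive characterisation of $\rightarrow$ as the least $p$-contextual relation w.r.t. $\emph{Op}_2 = \lbrace \nu, \mid, n_B[\;]\rbrace$ satisfying the axioms of Tables~\ref{tab:redRules} and~\ref{tab:RedCA}. Since $T'''$ omits action prefixing, summation and relabelling, these are the only operators under which I must close, so the induction has the base cases Red In, Red Out, Red Ploc, Red Sloc together with the closure rule Red~$\equiv$, and three contextual cases ($\mid$, $\nu$ and $n_A[\;]$). The two mobility axioms Red In and Red Out are discharged exactly as in the proof of Theorem~\ref{theo:results}(a): I assemble the auxiliary $\lambda$-transitions produced by Act, Enter and Co-Enter (resp. Exit), lift them over the residual processes and private names using $\lambda$-Par and $\lambda$-Res, and feed them into $\tau$-In (resp. $\tau$-Out); the resulting target matches the reduct up to the axiom Zero Par, which is all the statement requires.

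The genuinely new work is the two context-awareness axioms. For Red Ploc I would reproduce the derivation displayed just before the theorem: from Ploc1 and Par-Ploc1 I obtain the left premise $\ltransition{n_B[ploc(x).P_1 \mid P_2] \mid Q}{ploc1(z)}{\langle n_B[ploc(x).P_1 \mid P_2]\rangle (0 \mid Q)}$, and from Act-Ploc I obtain the right premise $\ltransition{n_B[ploc(x).P_1 \mid P_2]}{ploc(z)}{n_B[P_1\lbrace x \leftarrow z\rbrace \mid P_2]}$; applying $\tau$-Ploc then yields a $\tau$-transition of $m_A[\cdots]$ whose target is $m_A[n_B[P_1\lbrace x\leftarrow z\rbrace \mid P_2]\lbrace z\leftarrow m_A\rbrace \mid 0 \mid Q]$. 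Choosing $z$ fresh (so that $z \nin fn(P_2)$), the composed substitution $\lbrace x\leftarrow z\rbrace\lbrace z\leftarrow m_A\rbrace$ collapses to $\lbrace x\leftarrow m_A\rbrace$, and Zero Par removes the residual $0$, giving a target structurally congruent to the reduct $m_A[n_B[P_1\lbrace x\leftarrow m_A\rbrace \mid P_2]\mid R]$ (with $R = Q$). Red Sloc is handled symmetrically using the SOS rules of Table~\ref{tab:SlocSOS}.

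For the contextual cases I transport the $\tau$-transition supplied by the induction hypothesis through the relevant operator. Parallel composition uses $\lambda$-Par with $\lambda = \tau$, which is legitimate since condition $(*)$ of Table~\ref{rules2:Other} extends $\lambda$ to include $\tau$; restriction of a port or ambient name uses $\lambda$-Res, whose side condition $u \nin fn(\tau)$ is vacuous because $fn(\tau) = \emptyset$; and ambient wrapping uses $\tau$-Amb. In each case $\equiv$ is preserved by the same operator (as $\equiv$ is $p$-contextual w.r.t. $\emph{Op}_1 \supseteq \emph{Op}_2$), so the induction hypothesis $Q \equiv P'$ lifts to $\mathcal{O}[Q] \equiv \mathcal{O}[P']$. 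The closure rule Red~$\equiv$ is dealt with by the Struct rule: from $P \equiv Q$, the induction hypothesis $\ltransition{Q}{\tau}{R}$ with $R \equiv Q'$, and $Q' \equiv P'$, transitivity of $\equiv$ gives $R \equiv P'$, and Struct yields $\ltransition{P}{\tau}{R}$.

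I expect the only delicate points to be bookkeeping rather than conceptual. In the base cases the side conditions $(*)$ and $(**)$ of $\tau$-In, $\tau$-Out and $\tau$-Ploc must be met; this forces the usual concretion conventions for $\nu\tilde{m}\langle P\rangle Q$ and possibly $\alpha$-conversion to keep the shared private names $\tilde{p},\tilde{q},\tilde{m}$ disjoint from the free names of the ambient bodies, so that no capture occurs when the axioms are applied inside arbitrary $\nu$-contexts. In the Ploc/Sloc cases the one substantive check is the substitution identity $P_1\lbrace x\leftarrow z\rbrace\lbrace z\leftarrow m_A\rbrace \equiv P_1\lbrace x\leftarrow m_A\rbrace$ under the freshness hypothesis on $z$. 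None of this affects the structure of the argument, which is why --- exactly as for Theorem~\ref{theo:results}(a) --- the soundness direction for the full calculus remains routine; the harder completeness direction, which would need structural lemmas about $ploc1$ and $sloc1$ transitions analogous to Lemmas~\ref{lemma:enter} and~\ref{lemma:move}, is deliberately not claimed here.
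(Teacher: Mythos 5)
Your proposal is correct and takes essentially the same route as the paper: the paper asserts this soundness direction as routine (``we easily have''), implicitly relying on the same transition induction used for Theorem~\ref{theo:results}(a) together with the worked $\tau$-Ploc derivation displayed immediately before the theorem statement, which you reproduce verbatim as your Red Ploc case. Your extra bookkeeping --- the contextual cases via $\lambda$-Par, $\lambda$-Res, $\tau$-Amb and Struct, the Red Sloc case via Table~\ref{tab:SlocSOS}, and the substitution identity under freshness of $z$ --- merely fills in details the paper leaves implicit, and you rightly refrain from claiming the completeness direction, which the paper also only conjectures.
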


We conjecture that the completeness part of the correspondence between the operational semantics and reduction semantics is also valid. The proof relies on several auxiliary lemmas. For example, if $\lambda=ploc1(z)$ then the lemma for $\lambda$ is:

\begin{lemma}

If $\ltransition {P}{ploc1(z)} (\nu \tilde{p}) \; \langle P' \rangle P''' $ and $ \ltransition {P'} {ploc(z)} {P''}$, where variable $z$ does not appear in $P$, then $P \equiv \nu \tilde{p} (n_B[ploc(x).P_{1} \mid P_{2}] \mid P_3), \; P'  \equiv n_B[ploc(x).P_1 \mid P_{2}], \; P''' \equiv P_3 \mbox{ and } P'' \equiv n_B[P_1 \lbrace x \leftarrow z \rbrace \mid P_{2}] $ for some $P_1$, $P_2$, $P_3$, $n_B$ with $n_B \not \in \tilde{p}$, $z \not \in fn(P_2) \cup fn(P_3)$ and $\tilde{p}$ a set of private ambient names in $P$.

\end{lemma}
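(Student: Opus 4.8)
The plan is to argue by transition induction on the derivation of the first hypothesis $\ltransition{P}{ploc1(z)}{(\nu \tilde{p}) \langle P' \rangle P'''}$, exactly in the style of the proofs of Lemmas~\ref{lemma:enter} and~\ref{lemma:move}, while carrying the second hypothesis $\ltransition{P'}{ploc(z)}{P''}$ along as an auxiliary assumption that is only fully exploited in the base case. A transition labelled $ploc1(z)$ can be concluded by only four rules: Ploc1 (Table~\ref{tab:PlocSOS}), the instance Par-Ploc1 of $\lambda$-Par, $\lambda$-Res, and Struct (both Table~\ref{rules2:Other}); I would treat these in turn.

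In the base case Ploc1 the rule forces $P = n_B[P_0]$ for some name $n_B$ and process $P_0$, with $\tilde{p} = \emptyset$, $P' \equiv n_B[P_0]$ and $P''' \equiv 0$. The work then lies entirely in inverting the auxiliary transition $\ltransition{n_B[P_0]}{ploc(z)}{P''}$. I would establish a companion inversion lemma for $ploc(z)$-transitions stating that (i) an ambient emits $ploc(z)$ only by propagating it from its body, so $\ltransition{P_0}{ploc(z)}{P_0''}$ with $P'' \equiv n_B[P_0'']$, and (ii) a process emits $ploc(z)$ only by firing a top-level $ploc(x)$ prefix, so $P_0 \equiv ploc(x).P_1 \mid P_2$ with $z \notin fn(P_2)$ and $P_0'' \equiv P_1 \lbrace x \leftarrow z \rbrace \mid P_2$. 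Combining these with $P = n_B[P_0]$ and choosing $P_3 = 0$ yields the claimed decomposition, with $P'' \equiv n_B[P_1 \lbrace x \leftarrow z \rbrace \mid P_2]$ and $P''' \equiv 0 \equiv P_3$.

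For the inductive cases the decisive observation is that both $\lambda$-Par and $\lambda$-Res leave the first component $P'$ of the concretion untouched: by the conventions recorded after Table~\ref{rules2:Other}, composing with $Q$ only enlarges the residual to $P''' \mid Q$, and restricting by $u$ either absorbs $u$ into $\tilde{p}$ (when $u \in fn(P')$) or pushes it into the residual (otherwise). Consequently the auxiliary hypothesis $\ltransition{P'}{ploc(z)}{P''}$ transfers verbatim to the immediate subderivation and the induction hypothesis applies. In the Par-Ploc1 case I take $P = P_a \mid Q$, apply the induction hypothesis to $P_a$, and set $P_3$ to the resulting residual composed with $Q$, using the side condition of $\lambda$-Par together with $\alpha$-conversion to secure $z \notin fn(P_3)$ and $n_B \notin \tilde{p}$. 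The $\lambda$-Res case reattaches the restricted name according to the same convention, and the Struct case follows by transporting the concretion across $\equiv$ and composing the congruences, noting that the $ploc(z)$ transition of $P'$ likewise transfers across $\equiv$ by Struct.

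The step I expect to be the main obstacle is the base case, and within it the two-level inversion of $ploc(z)$ for the ambient $n_B[P_0]$: one must peel the ambient boundary, and then the enclosing parallel and restriction operators inside $P_0$, to locate the unique $ploc(x)$ prefix responsible for the action and to account for both the substitution $\lbrace x \leftarrow z \rbrace$ and the freshness constraint $z \notin fn(P_2)$ (with inner restrictions absorbed into $\tilde{p}$ or $P_2$ via Res Par and Res Amb). A secondary but genuine subtlety, absent from Lemmas~\ref{lemma:enter} and~\ref{lemma:move}, is that the label $ploc1(z)$ does not mention the queried ambient's name $n_B$; so, unlike the $move \; n_B$ and $enter \; n_B$ cases where the side condition of $\lambda$-Res keeps the named ambient out of $\tilde{p}$ automatically, the $\lambda$-Res case here needs explicit care (via $\alpha$-conversion, or a well-formedness assumption on $P$) to maintain $n_B \notin \tilde{p}$.
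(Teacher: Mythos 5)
There is nothing in the paper to compare your proof against: this lemma is stated without proof, as one of several auxiliary lemmas on which the \emph{conjectured} completeness half of the reduction/LTS correspondence for CMC would rely (the paper proves only the soundness half, Theorem~\ref{theo:CMCpp}, and even the detailed proofs for CMC$\rm{_b}$ are deferred to \cite{Gul}). So your proposal has to be judged on its own. Its skeleton is the natural one and presumably the intended one: induction on the derivation of $\ltransition{P}{ploc1(z)}{(\nu \tilde{p})\langle P' \rangle P'''}$ with cases Ploc1, $\lambda$-Par, $\lambda$-Res and Struct, carrying the second hypothesis along unchanged because $\lambda$-Par and $\lambda$-Res do not alter the first component of a concretion; this mirrors the Merro--Hennessy-style inversion lemmas (Lemmas~\ref{lemma:enter} and~\ref{lemma:move}).

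The gaps are in the base case, exactly where you place the main work, and they are not mere bookkeeping. First, your inversion fact (i) (``an ambient emits $ploc(z)$ only by propagating it from its body'') cannot be \emph{established} from the paper's rules: no rule in Table~\ref{tab:PlocSOS} or Table~\ref{rules2:Other} lets an ambient $n_B[P_0]$ perform a $ploc(z)$ transition at all ($\tau$-Amb propagates only $\tau$, and Global-Com only communication actions $a(z)$, $\overline{b}(z)$). Under the rules as written the second hypothesis of the lemma is underivable whenever $P'$ has ambient shape, so the lemma is vacuously true and your base case proves nothing; the paper itself asserts such a transition (its premise~\ref{PlocPrem2}) without deriving it, so a real proof must begin by positing the missing propagation rule, and you should have flagged this. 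Second, once that rule is added, your inversion fact (ii) is false as stated: by $\lambda$-Res a body such as $P_0 = (\nu k_C)(ploc(x).in\;k_C.0 \mid k_C[0])$ emits $ploc(z)$, yet it is not structurally congruent to any $ploc(x).P_1 \mid P_2$ (no axiom of $\equiv$ moves a restriction across a prefix, and Res Par requires the name to be free on only one side); and since in the Ploc1 case the concretion binder $\tilde{p}$ is empty, the inner restriction has nowhere to go, so the claimed decomposition of $P'$ fails. Repairing this requires working with outcomes up to structural congruence (extruding restrictions into the concretion binder) or weakening the statement to allow a binder inside $P'$ --- i.e., the lemma itself needs restating, which your proof does not notice. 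Third, the $\lambda$-Res subtlety you flag is genuine, but your $\alpha$-conversion remedy fails when the restricted name is the enquiring ambient's own name, e.g.\ $P = (\nu n_B)(n_B[ploc(x).0])$: renaming the binder renames the ambient too, so membership of the ambient's name in $\tilde{p}$ survives every renaming and the conclusion $n_B \notin \tilde{p}$ is simply violated; here again it is the statement, not the proof, that must change.
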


%***************************SLOC SOS ****************************
\begin{table}[t!]
\begin{tabular}{llll} \\
\multicolumn{4}{l}
{      
        \AxiomC{}
        \LeftLabel{(Act-Sloc)\;}
         \RightLabel{ \;  ($z$ doesn't appear in $P$) }
        \UnaryInfC{ $\ltransition{sloc(x). P }{sloc(z)}P \lbrace x\leftarrow z \rbrace $} \DisplayProof } 
        \\[4pt]
	 \AxiomC{}
	\LeftLabel{(Sloc1)}
	\UnaryInfC{$ \ltransition{m_A[P]}{sloc1(z)} \langle m_A[P] \rangle 0$}
	\DisplayProof 
     &
	\AxiomC{}
        \LeftLabel{(Sib-Amb)}
        %\RightLabel{($z \not \in \emph{fn}(Q)$)}
        \UnaryInfC{$\ltransition{n_B[P] }{amb \; n_B}  P $}
        \DisplayProof 
     \quad 
     	\AxiomC{$\ltransition{P}{amb \; n_B} P'$}
        \LeftLabel{(Par-Amb)}
        %\RightLabel{($z \not \in \emph{fn}(Q)$)}
        \UnaryInfC{$\ltransition{P \mid Q }{amb \; n_B} P' $}
        \DisplayProof 
        \\[4pt]     
	\multicolumn{2}{l}{
		\AxiomC{$ \ltransition{P}{sloc1(z)}{ \nu \tilde{p} \langle P' \rangle P'''}  \quad \ltransition{P'} {sloc(z) } P'' \quad \transition {Q} {amb \; n_B} Q'  $ }
	\LeftLabel{ ($\tau$-Sloc)}
	\RightLabel{$^{(*)}$}
    \UnaryInfC{$\transition{P \mid Q} {\tau}  \nu \tilde{p} ( P'' \lbrace z \leftarrow n_B \rbrace \mid P''') \mid Q$ }
        \DisplayProof 
        }          	
 \end{tabular}
\caption{SOS rules for \emph{sloc}. Condition ($*$) is as in Table \ref{rules2}} 
\label{tab:SlocSOS}
\end{table}

\section{Interactive Shopping Mall Case Study} \label{sec:sm}

This case study illustrates the usefulness of global communication, $in \; n_B$, $out \; n_B$, and \emph{ploc(x)} features of CMC. The shopping mall consists of a number of retail outlets, clients and personal digital assistants (PDAs). To offer clients a high level of services, there is a server that delivers services to clients on requests via PDAs which are distributed inside the mall. The tree representation of the shopping mall setting is given in Figure \ref{fig:mall}, where the initial setting is given on the left-hand side and the final setting is on the right hand side.
\begin{figure}[htbp]
\centering
\begin{tabular}{l:l}
\begin{pspicture}(0,-2.605)(5.64,2.5)
\psframe[linewidth=0.03,dimen=outer](4.58,2.80375)(3.44,2.12375)
\psframe[linewidth=0.03,dimen=outer](5.64,0.90375)(4.26,0.22375)
\psframe[linewidth=0.03,dimen=outer](3.9,0.88375)(2.58,0.20375)
\psframe[linewidth=0.03,dimen=outer](1.6,0.84375)(0.0,0.26375)
\psframe[linewidth=0.03,dimen=outer](4.48,-0.73625)(3.18,-1.37625)
\psframe[linewidth=0.03,dimen=outer](2.64,-0.65625)(1.32,-1.33625)
\usefont{T1}{ptm}{m}{n}
\rput(3.9212499,2.45375){$sm$}
\usefont{T1}{ptm}{m}{n}
\rput(0.7912501,0.55375){$server$}
\usefont{T1}{ptm}{m}{n}
\rput(2.9912498,0.51375){$m$}
\usefont{T1}{ptm}{m}{n}
\rput(4.94125,0.57375){$n$}
\usefont{T1}{ptm}{m}{n}
\rput(1.8712503,-1.06625){$client$}
\usefont{T1}{ptm}{m}{n}
\rput(3.8012505,-1.04625){$pda$}
\psline[linewidth=0.03cm,arrowsize=0.05291667cm 2.0,arrowlength=1.4,arrowinset=0.4]{->}(3.88,2.16375)(3.26,0.84375)
\psline[linewidth=0.03cm,arrowsize=0.05291667cm 2.0,arrowlength=1.4,arrowinset=0.4]{->}(4.08,2.14375)(5.02,0.88375)
\psline[linewidth=0.03cm,arrowsize=0.05291667cm 2.0,arrowlength=1.4,arrowinset=0.4]{->}(2.92,0.20375)(2.28,-0.67625)
\psline[linewidth=0.03cm,arrowsize=0.05291667cm 2.0,arrowlength=1.4,arrowinset=0.4]{->}(2.58,-1.11625)(3.2,-1.13625)
\psline[linewidth=0.03,linestyle=dashed,dash=0.16cm 0.16cm](2.4383333,-1.31625)(2.42,-2.03625)(3.96,-2.03625)(3.98,-1.29625)
\psline[linewidth=0.03,linestyle=dashed,dash=0.16cm 0.16cm](0.86,0.86375)(0.84,1.60375)(4.08,1.60375)(4.06,-0.63625)
\usefont{T1}{ptm}{m}{n}
\rput(3.1042187,-1.76625){$a$}
\usefont{T1}{ptm}{m}{n}
\rput(2.3942187,1.81375){$b$}
\usefont{T1}{ptm}{m}{n}
\rput(2.3442185,-2.66625){$c$}
\psline[linewidth=0.03,linestyle=dashed,dash=0.16cm 0.16cm](0.82,0.32375)(0.84130174,-2.37625)(4.4,-2.31625)(4.399191,-1.31625)
\end{pspicture}    \qquad & \qquad
\begin{pspicture}(0,-2.19)(5.88,2)
\psframe[linewidth=0.03,dimen=outer](3.92,1.96)(2.78,1.28)
\psframe[linewidth=0.03,dimen=outer](4.98,0.06)(3.6,-0.62)
\psframe[linewidth=0.03,dimen=outer](3.24,0.04)(1.92,-0.64)
\psframe[linewidth=0.03,dimen=outer](1.48,0.04)(0.0,-0.62)
\psframe[linewidth=0.03,dimen=outer](3.28,-1.3)(1.98,-1.94)
\psframe[linewidth=0.03,dimen=outer](5.0,-1.28)(3.68,-1.96)
\usefont{T1}{ptm}{m}{n}
\rput(3.335625,1.61){$sm$}
\usefont{T1}{ptm}{m}{n}
\rput(0.72562504,-0.29){$server$}
\usefont{T1}{ptm}{m}{n}
\rput(2.405625,-0.33){$m$}
\usefont{T1}{ptm}{m}{n}
\rput(4.3556247,-0.27){$n$}
\usefont{T1}{ptm}{m}{n}
\rput(4.3256254,-1.63){$client$}
\usefont{T1}{ptm}{m}{n}
\rput(2.6156256,-1.59){$pda$}
\psline[linewidth=0.03cm,arrowsize=0.05291667cm 2.0,arrowlength=1.4,arrowinset=0.4]{->}(3.22,1.32)(2.6,0.0)
\psline[linewidth=0.03cm,arrowsize=0.05291667cm 2.0,arrowlength=1.4,arrowinset=0.4]{->}(3.42,1.3)(4.36,0.04)
\psline[linewidth=0.03cm,arrowsize=0.05291667cm 2.0,arrowlength=1.4,arrowinset=0.4]{->}(4.26,-0.62)(4.28,-1.3)
\psline[linewidth=0.03cm,arrowsize=0.05291667cm 2.0,arrowlength=1.4,arrowinset=0.4]{->}(3.72,-1.56)(3.26,-1.58)
\end{pspicture}   \\
\end{tabular}
\caption{Interactive Shopping Mall settings}
\label{fig:mall}
\end{figure}
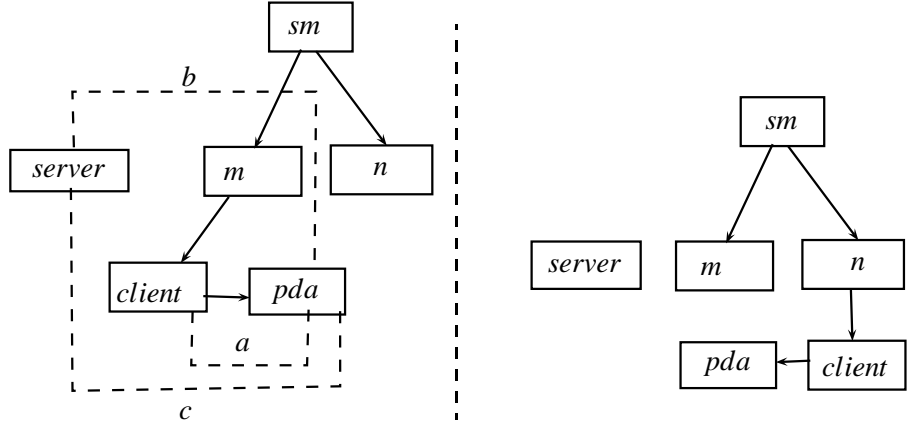
In this figure, the ambient \emph{sm} is the shopping mall with two retail outlets $m$ and $n$. For simplicity we have only one client and one PDA, represented by the ambients \emph{client} and \emph{pda} respectively, which are inside $m$.
\\[2pt]
\textbf{Scenario}: The client wishes to move from her current location $m$ to a target location $n$ inside the mall. She picks up a \emph{pda} and sends the two locations to the $server$ and requests for the path from $m$ to $n$. The server generates this path as a sequence of capabilities and delivers it to the \emph{client} via \emph{pda}.

We define our setting as follows where $C'$, $P'$ and $S'$ are some processes:
%$$
%\begin{array}{llll}
%\nu abc \;(sm[m[\emph{client}[\emph{pull(client)} \; \emph{pda.ploc(x)}.\overline{a}(x,n).a(u).u.C'] \mid 
%\emph{pda}[a(y_1,y_2).\overline{b}(y_1,y_2).c(z).\overline{a}(z).P']] \mid n[\;]]  \\[4pt]
% \quad \mid \emph{server}[b(x_1, x_2). \overline{c}(path(x_1,x_2)).S'] )
%\end{array}
%$$
%The ambient \emph{client} initiates an interaction with the PDA by its \emph{pull(client) pda} capability. After the resulting $\tau$-transition, \emph{pda}, the sibling of \emph{client}, becomes a child of \emph{client}, namely
\begin{center}
$
\begin{array}{llll}
\nu abc \;(sm[m[\emph{client}[\emph{ploc(x)}.\overline{a}(x,n).a(u).u.C' \mid 
\emph{pda}[a(y_1,y_2).\overline{b}(y_1,y_2).c(z).\overline{a}(z).P']] ] \mid n[\;]] \mid \\[4pt]
 \quad server[b(x_1, x_2). \overline{c}(\emph{path}(T,x_1,x_2)).S'] )
\end{array}
$
\end{center}
Here, \emph{path}(T,x$_1,x_2$) is a function that calculates a path between the source node $x_1$ and the target node $x_2$ in a given tree $T$. The only possible execution sequence from this state is $\ltransition{} {\tau_{ploc}} \ltransition{} {\tau_{a}} \ltransition{} {\tau_{b}} \ltransition{} {\tau_{c}} \ltransition{} {\tau_{a}} S'' $ for some $S''$. In this sequence \emph{client} acquires parent's name by $ploc(x)$ and sends her source and the target locations to \emph{server} via $a$. The \emph{server} in response calculates the \emph{path (m,n)} between the two locations and delivers it back to the \emph{client}. In this particular case, the path calculated from $m$ to $n$ is $out \; m. in \; n$. Now the system has the form $S'' \equiv  \nu abc \;( sm[m[ \emph{client}[out \; m. in \; n.C' \mid  \emph{pda}[P']]] \mid n[\;]] \mid  server[S'] )$. After executing \emph{out m.in n} the final state of the system becomes $ 
\nu abc \;(  sm[m[ \; ] \mid n[\emph{client}[C' \mid \emph{pda}[P']]]] \mid  server[S'] )$, and is represented on the right hand side of Figure \ref{fig:mall}.

\section{Conclusion and Related Work} \label{sec:conclusion}

We have proposed CMC for the modelling of mobility, communication and context-awareness in the setting of ubiquitous computing. The notion of ambients mobility has been modelled by the \emph{in $n_B$} and \emph{out $n_B$} capabilities \cite{ambC}. A new form of global communication has been introduced in CMC which is similar to that in Milner's CCS. Ambient's name has been tagged with the set of ports which are functioning as a restriction on global communication, specified at the level of ambients.  A labelled transition system semantics has been developed, where $P \transition {}{\tau} Q$ represents not only a binary communication of processes as in CCS but also the ambients' mobility steps by means of their $in \; n_B$ and $out \; n_B$ capabilities. This has been achieved by additional labels and specialised transitions from processes to the so-called outcomes which are either processes or concretions.

Recently, a number of variants of MA have been introduced. \emph{Boxed Ambients (BA)} \cite{BA} inherits mobility primitives, namely the $in$ and $out$ capabilities from Mobile Ambients and introduce a direct communication method between parent and child. \emph{Channel Ambient calculus (CA)} \cite{CA} is a variant of Boxed Ambients. In CA, channels are defined as a first class objects and the communication is either between parent and child or between siblings. To the best of our knowledge, the ambient calculi do not support a direct interaction of an agent with a subagent inside another agent. Communication can only happen between the two adjacent agents, namely communication between parent and child or between siblings. CMC has introduced a new form of global communication by defining ambients as $m_A[P]$, where $m$ is the name of the ambient, $A$ is the set of ports that $m$ is allowed to communicate on, and $P$ is an executing agent.

Poslad in \cite{7P} addressed a number of theoretical concepts in the context of ubiquitous computing. In ubiquitous computing setting computations could be mobile and context-aware as, for example, in \cite{AI1,Context}. Satoh has researched spatial organisation of systems \cite{15IS,6IS} and concluded that technological advancements have enabled computing devices to become aware of their surroundings. Location-awareness has turned out to be useful in many applications, in particular, in determining position, navigation, tracking, and monitoring of ubiquitous computing devices. The notion of bigraph has been introduced by Milner in \cite{BG} with the idea of presenting two independent structures on the same set of nodes. A bigraph is a mathematical structure consisting of a place graph and a link graph with common nodes.
%The theory of bigraphical reactive systems \cite{11M} is based on a graphical model of mobile computation that emphasizes both locality and connectivity.
Process calculi and behavioural equivalences have led to an approach in bigraph theory somewhat different from the well-known tradition of graph rewriting \cite{Bigraph3}. Leonhardt \cite{LocModels} classified location models into geometric and symbolic models. In geometric models locations are represented as coordinates systems, whereas symbolic location models use the notion of place and labelling the locations. We use the notion of place to model location, and represent the structure of our system by a hierarchical space tree. The nodes represent the places, objects or computing devices, whereas the edges represent the containment relations between objects. Each node or object is represented by named ambient, which may contain nested ambients inside, as in \cite{ambC}.

\emph{A Calculus of Context Aware Ambients (CCA)} \cite{CCA2} describes the context-awareness requirements of the mobile systems. It introduces the notion of context expression that constraints the capabilities. We also add a context-awareness mechanism to our calculus by introducing two capabilities to it. The new capability \emph{ploc(x).P} allows an ambient to acquire the name of its parent and pass it as $x$ to $P$, whereas \emph{sloc(x).P} enquires the sibling's name of an ambient. Conversation Calculus \cite{CC, CC2} is designed for expressing and analysing service based systems. It proposes a spatial communication topology where conversation contexts are used as message exchange patterns. The construct \emph{here(x)} that allows access to the conversation medium in Conversation Calculus is similar to the \emph{ploc(x)} and \emph{sloc(x)} capabilities of our calculus. %The capabilities \emph{ploc(x)} and \emph{sloc(x)} enable ambients to be aware of their current locations and surroundings respectively.
These capabilities are not precisely used for only communication, whereas in Conversation Calculus conversation contexts are proposed as communication medium that controls information sharing among processes. Sessions \cite{sessions98} introduce a communication context among various partners to exchange messages based on previously agreed scheme, and sessions of specific patterns are introduced to express communication primitives. In CMC, we have modelled physical contexts and have intuitively used ambients to represent the structures. The systematic addition of context-awareness primitives smoothly increases the expressiveness power of the calculus.

In past few years, several operational semantics have been developed for MA and its variants as, for example, in \cite{BSTSA,BTMA2,ConcretionMA}. The authors in \cite{BSTSA} introduce a labelled transition system based operational semantics, and a labelled bisimulation equivalence which is proved to coincide with reduction barbed congruence. We also develop a labelled transition semantics and prove that the semantics coincides with the standard reduction semantics. Our labelled transition semantics is inspired by that in \cite{BSTSA}. The main difference is that we do not use the co-capabilities, hence preserving the standard MA semantics.
We have defined barbed bisimulation and congruence, and capability barbed bisimulation and congruence and have showed that the respective congruence relations of the two forms of barbs coincide. The notion of behavioural equivalence and the proof method for establishing the equivalence is inspired by that in \cite{BSTSA}. The authors in \cite{BSTSA}, use co-actions and passwords that help them in proving their results, whereas the use of global communication in CMC is fundamental in proving the results. The expressiveness and usefulness of the calculus has been illustrated by presenting intelligent hospital and interactive shopping mall case studies, where the relevant constructs are used to model various features of the calculus.

%\nocite{*}
\bibliographystyle{eptcs}
\bibliography{generic}

\begin{thebibliography}{10}
\providecommand{\bibitemdeclare}[2]{}
\providecommand{\surnamestart}{}
\providecommand{\surnameend}{}
\providecommand{\urlprefix}{Available at }
\providecommand{\url}[1]{\texttt{#1}}
\providecommand{\href}[2]{\texttt{#2}}
\providecommand{\urlalt}[2]{\href{#1}{#2}}
\providecommand{\doi}[1]{doi:\urlalt{http://dx.doi.org/#1}{#1}}
\providecommand{\bibinfo}[2]{#2}

\bibitemdeclare{article}{ACP84}
\bibitem{ACP84}
\bibinfo{author}{\surnamestart {Bergstra, J. A., and Klop, J. W.}\surnameend}
  (\bibinfo{year}{1984}): \emph{\bibinfo{title}{Process Algebra for Synchronous
  Communication}}.
\newblock {\sl \bibinfo{journal}{Information and Control}}
  \bibinfo{volume}{60}(\bibinfo{number}{1-3}), pp. \bibinfo{pages}{109--137},
  \doi{10.1016/S0019-9958(84)80025-X}.
\newblock \urlprefix\url{http://dx.doi.org/10.1016/S0019-9958(84)80025-X}.

\bibitemdeclare{article}{BA}
\bibitem{BA}
\bibinfo{author}{\surnamestart {Bugliesi, M., Castagna, G. and Crafa,
  S}\surnameend} (\bibinfo{year}{2004}): \emph{\bibinfo{title}{Access control
  for mobile agents: The calculus of boxed ambients}}.
\newblock {\sl \bibinfo{journal}{ACM Transactions on Programming Languages and
  Systems}} \bibinfo{volume}{26}(\bibinfo{number}{1}), pp.
  \bibinfo{pages}{57--124}, \doi{10.1145/963778.963781}.
\newblock \urlprefix\url{http://doi.acm.org/10.1145/963778.963781}.

\bibitemdeclare{inproceedings}{CC2}
\bibitem{CC2}
\bibinfo{author}{\surnamestart {Caires, L. and Vieira, H. T.}\surnameend}
  (\bibinfo{year}{2010}): \emph{\bibinfo{title}{Analysis of Service Oriented
  Software Systems with the Conversation Calculus}}.
\newblock \bibinfo{series}{LNCS}, \bibinfo{publisher}{Springer}, pp.
  \bibinfo{pages}{6--33}, \doi{10.1007/978-3-642-27269-1\_2}.

\bibitemdeclare{article}{ambC}
\bibitem{ambC}
\bibinfo{author}{\surnamestart {Cardelli, L, Gordon, Andrew D}\surnameend}
  (\bibinfo{year}{2000}): \emph{\bibinfo{title}{{Mobile Ambients}}}.
\newblock {\sl \bibinfo{journal}{Theoretical Computer Science}}
  \bibinfo{volume}{240}(\bibinfo{number}{1}), pp. \bibinfo{pages}{177--213},
  \doi{10.1016/S0304-3975(99)00231-5}.

\bibitemdeclare{article}{CCA2}
\bibitem{CCA2}
\bibinfo{author}{\surnamestart {Fran\c{c}ois S., Cau, A. and Zedan,
  H.}\surnameend} (\bibinfo{year}{2011}): \emph{\bibinfo{title}{The Calculus of
  Context-aware Ambients}}.
\newblock {\sl \bibinfo{journal}{Journal of Computer and System Science}}
  \bibinfo{volume}{77}(\bibinfo{number}{4}), pp. \bibinfo{pages}{597--620},
  \doi{10.1016/j.jcss.2010.02.003}.

\bibitemdeclare{phdthesis}{Gul}
\bibitem{Gul}
\bibinfo{author}{\surnamestart {Gul, N.}\surnameend} (\bibinfo{year}{2015}):
  \emph{\bibinfo{title}{A Calculus of Mobility and Communication for Ubiquitous
  Computing}}.
\newblock Ph.D. thesis, \bibinfo{school}{University of Leicester, England}.

\bibitemdeclare{book}{CSP85}
\bibitem{CSP85}
\bibinfo{author}{\surnamestart {Hoare, C. A. R.}\surnameend}
  (\bibinfo{year}{1985}): \emph{\bibinfo{title}{{Communicating Sequential
  Processes}}}.
\newblock \bibinfo{publisher}{Prentice-Hall}.

\bibitemdeclare{inproceedings}{sessions98}
\bibitem{sessions98}
\bibinfo{author}{\surnamestart {Honda, K. and Vasconcelos, V. T. and Kubo,
  M.}\surnameend} (\bibinfo{year}{1998}): \emph{\bibinfo{title}{Language
  Primitives and Type Discipline for Structured Communication-Based
  Programming}}.
\newblock In: {\sl \bibinfo{booktitle}{Programming Languages and Systems}},
  {\sl \bibinfo{series}{LNCS}} \bibinfo{volume}{1381},
  \bibinfo{publisher}{Springer}, pp. \bibinfo{pages}{122--138},
  \doi{10.1007/BFb0053567}.

\bibitemdeclare{phdthesis}{LocModels}
\bibitem{LocModels}
\bibinfo{author}{\surnamestart {Leonhardt, U.}\surnameend}
  (\bibinfo{year}{1998}): \emph{\bibinfo{title}{Supporting Location-Awareness
  in Open Distributed Systems}}.
\newblock Ph.D. thesis, \bibinfo{school}{Imperial College London}.

\bibitemdeclare{article}{BSTSA}
\bibitem{BSTSA}
\bibinfo{author}{\surnamestart {Merro, M. and Hennessy, M.}\surnameend}
  (\bibinfo{year}{2006}): \emph{\bibinfo{title}{A bisimulation-based semantic
  theory of Safe Ambients}}.
\newblock {\sl \bibinfo{journal}{ACM Transantions on Programming Languages and
  Systems}} \bibinfo{volume}{28}(\bibinfo{number}{2}), pp.
  \bibinfo{pages}{290--330}, \doi{10.1145/1119479.1119482}.

\bibitemdeclare{inproceedings}{ConcretionMA}
\bibitem{ConcretionMA}
\bibinfo{author}{\surnamestart {Merro, M. and Nardelli, F. Z.}\surnameend}
  (\bibinfo{year}{2003}): \emph{\bibinfo{title}{Bisimulation Proof Methods for
  Mobile Ambients}}.
\newblock In: {\sl \bibinfo{booktitle}{ICALP}}, \bibinfo{series}{LNCS},
  \bibinfo{publisher}{Springer}, pp. \bibinfo{pages}{584--598},
  \doi{10.1007/3-540-45061-0\_47}.

\bibitemdeclare{article}{BTMA2}
\bibitem{BTMA2}
\bibinfo{author}{\surnamestart {Merro, M. and Nardelli, F. Z.}\surnameend}
  (\bibinfo{year}{2005}): \emph{\bibinfo{title}{Behavioral theory for mobile
  ambients}}.
\newblock {\sl \bibinfo{journal}{Journal of ACM}}
  \bibinfo{volume}{52}(\bibinfo{number}{6}), pp. \bibinfo{pages}{961--1023},
  \doi{10.1145/1101821.1101825}.
\newblock \urlprefix\url{http://doi.acm.org/10.1145/1101821.1101825}.

\bibitemdeclare{book}{9M}
\bibitem{9M}
\bibinfo{author}{\surnamestart {Milner, R}\surnameend} (\bibinfo{year}{1989}):
  \emph{\bibinfo{title}{{Communication and Concurrency}}}.
\newblock \bibinfo{publisher}{Prentice Hall Europe}.

\bibitemdeclare{book}{10M}
\bibitem{10M}
\bibinfo{author}{\surnamestart {Milner, R}\surnameend} (\bibinfo{year}{1999}):
  \emph{\bibinfo{title}{Communicating and Mobile Systems: The $\pi$-calculus}}.
\newblock \bibinfo{publisher}{Cambridge University Press}.

\bibitemdeclare{article}{Bigraph3}
\bibitem{Bigraph3}
\bibinfo{author}{\surnamestart {Milner, R.}\surnameend} (\bibinfo{year}{2008}):
  \emph{\bibinfo{title}{Bigraphs and Their Algebra}}.
\newblock {\sl \bibinfo{journal}{Electronic Notes Theoratical Computer
  Science}} \bibinfo{volume}{209}, pp. \bibinfo{pages}{5--19},
  \doi{10.1016/j.entcs.2008.04.002}.
\newblock \urlprefix\url{http://dx.doi.org/10.1016/j.entcs.2008.04.002}.

\bibitemdeclare{book}{BG}
\bibitem{BG}
\bibinfo{author}{\surnamestart {Milner, R}\surnameend} (\bibinfo{year}{2009}):
  \emph{\bibinfo{title}{{The Space and Motion of Communicating Agents}}}.
\newblock \bibinfo{publisher}{Cambridge University Press}.

\bibitemdeclare{article}{AI1}
\bibitem{AI1}
\bibinfo{author}{\surnamestart {Olaru, A. and Gratie, C.}\surnameend}
  (\bibinfo{year}{2011}): \emph{\bibinfo{title}{Agent-Based, Context-Aware
  Information Sharing for Ambient Intelligence}}.
\newblock {\sl \bibinfo{journal}{International Journal on Artificial
  Intelligence Tools}} \bibinfo{volume}{20}(\bibinfo{number}{6}), pp.
  \bibinfo{pages}{985--1000}, \doi{10.1142/S0218213011000498}.

\bibitemdeclare{inproceedings}{CA}
\bibitem{CA}
\bibinfo{author}{\surnamestart {Phillips, A., Yoshida, N. and Eisenbach,
  S}\surnameend} (\bibinfo{year}{2004}): \emph{\bibinfo{title}{A Distributed
  Abstract Machine for Boxed Ambient Calculi}}.
\newblock In: {\sl \bibinfo{booktitle}{Programming Languages and Systems}},
  {\sl \bibinfo{series}{LNCS}} \bibinfo{volume}{2986},
  \bibinfo{publisher}{Springer}, pp. \bibinfo{pages}{155--170},
  \doi{10.1007/978-3-540-24725-8\_12}.
\newblock \urlprefix\url{http://dx.doi.org/10.1007/978-3-540-24725-8_12}.

\bibitemdeclare{article}{SOS1}
\bibitem{SOS1}
\bibinfo{author}{\surnamestart {Plotkin, G. D}\surnameend}
  (\bibinfo{year}{2004}): \emph{\bibinfo{title}{A structural approach to
  operational semantics}}.
\newblock {\sl \bibinfo{journal}{Journal of Logic and Algebraic Programming}}
  \bibinfo{volume}{60-61}, pp. \bibinfo{pages}{17--139}.

\bibitemdeclare{book}{7P}
\bibitem{7P}
\bibinfo{author}{\surnamestart {Poslad, S}\surnameend} (\bibinfo{year}{2009}):
  \emph{\bibinfo{title}{Ubiquitous Computing: Smart Devices, Environments and
  Interactions}}, \bibinfo{edition}{1st} edition.
\newblock \bibinfo{publisher}{Wiley}.

\bibitemdeclare{article}{Context}
\bibitem{Context}
\bibinfo{author}{\surnamestart {Ranganathan, A. and Campbell, A.
  H.}\surnameend} (\bibinfo{year}{2003}): \emph{\bibinfo{title}{An
  infrastructure for context-awareness based on first order logic}}.
\newblock {\sl \bibinfo{journal}{Personal and Ubiquitous Computing}}
  \bibinfo{volume}{7}(\bibinfo{number}{6}), pp. \bibinfo{pages}{353--364},
  \doi{10.1007/s00779-003-0251-x}.

\bibitemdeclare{inproceedings}{15IS}
\bibitem{15IS}
\bibinfo{author}{\surnamestart {Satoh, I}\surnameend} (\bibinfo{year}{2005}):
  \emph{\bibinfo{title}{A location model for pervasive computing
  environments}}.
\newblock In: {\sl \bibinfo{booktitle}{Pervasive Computing and Communications,
  IEEE International Conference on}}, \bibinfo{publisher}{IEEE Computer
  Society}, pp. \bibinfo{pages}{215--224}, \doi{10.1109/PERCOM.2005.3}.

\bibitemdeclare{inproceedings}{6IS}
\bibitem{6IS}
\bibinfo{author}{\surnamestart {Satoh, I}\surnameend} (\bibinfo{year}{2005}):
  \emph{\bibinfo{title}{A Spatial Model for Ubiquitous Computing Services}}.
\newblock In: {\sl \bibinfo{booktitle}{IEICE Transactions on Communications}},
  \bibinfo{volume}{E88-B}, pp. \bibinfo{pages}{923--931},
  \doi{10.1093/ietcom/e88-b.3.923}.
\newblock \urlprefix\url{http://dx.doi.org/10.1093/ietcom/e88-b.3.923}.

\bibitemdeclare{inproceedings}{Ulidowski92}
\bibitem{Ulidowski92}
\bibinfo{author}{\surnamestart {Ulidowski, I.}\surnameend}
  (\bibinfo{year}{1992}): \emph{\bibinfo{title}{Equivalences on Observable
  Processes}}.
\newblock In: {\sl \bibinfo{booktitle}{Proceedings of the Seventh Annual
  Symposium on Logic in Computer Science, Santa Cruz, California, USA, June
  22-25, 1992}}, \bibinfo{publisher}{IEEE Computer Society}, pp.
  \bibinfo{pages}{148--159}, \doi{10.1109/LICS.1992.185529}.
\newblock \urlprefix\url{http://dx.doi.org/10.1109/LICS.1992.185529}.

\bibitemdeclare{inproceedings}{CC}
\bibitem{CC}
\bibinfo{author}{\surnamestart {Vieira, H. T., Caires, L. and Seco, J.
  C.}\surnameend} (\bibinfo{year}{2008}): \emph{\bibinfo{title}{The
  Conversation Calculus: A Model of Service-Oriented Computation}}.
\newblock {\sl \bibinfo{series}{LNCS}} \bibinfo{volume}{4960},
  \bibinfo{publisher}{Springer}, pp. \bibinfo{pages}{269--283},
  \doi{10.1007/978-3-540-78739-6\_21}.
\newblock \urlprefix\url{http://dx.doi.org/10.1007/978-3-540-78739-6_21}.

\bibitemdeclare{article}{5mw}
\bibitem{5mw}
\bibinfo{author}{\surnamestart {Weiser, M}\surnameend} (\bibinfo{year}{1999}):
  \emph{\bibinfo{title}{The computer for the 21st century}}.
\newblock {\sl \bibinfo{journal}{Scientific American}} \bibinfo{volume}{3}, pp.
  \bibinfo{pages}{3--11}, \doi{10.1145/329124.329126}.
\newblock \urlprefix\url{http://doi.acm.org/10.1145/329124.329126}.

\bibitemdeclare{inproceedings}{16mw}
\bibitem{16mw}
\bibinfo{author}{\surnamestart {Weiser, M}\surnameend} (\bibinfo{year}{1999}):
  \emph{\bibinfo{title}{Some Computer Science Issues in Ubiquitous Computing}}.
\newblock In: {\sl \bibinfo{booktitle}{Communications of the ACM}},
  \bibinfo{volume}{3}, pp. \bibinfo{pages}{75--84},
  \doi{10.1145/329124.329127}.
\newblock \urlprefix\url{http://doi.acm.org/10.1145/329124.329127}.

\end{thebibliography}
\newpage
\begin{appendices}
%\chapter{} \thispagestyle{empty} \label{appendix:A}

\section{SOS Rules for communication} \label{Appendix:A}
\begin{table}[htbp]
$
\begin{array}{llll}
\\
\AxiomC{}
	\LeftLabel{(Input)}
	\RightLabel{$(v \in V)$}
	\UnaryInfC{$a(z).P \transition{}{a(v)}{} P\lbrace v/z \rbrace$}
	\DisplayProof &
\quad \AxiomC{}
	\LeftLabel{(Output)}
	\UnaryInfC{$\overline{a}(x). P \transition{}{\overline{a}(x)}{} P$}
	\DisplayProof
\\[4pt]
	\AxiomC{$P \transition{}{\alpha}{} P'$}
	\LeftLabel{(Res-Act)}
	\RightLabel{ ($ a \notin \emph{fn}(\alpha) $)}
	\UnaryInfC{$(\nu a)P \transition{}{\alpha}{} (\nu a)P'$}
	\DisplayProof &
\quad \mbox{(Sum)} \displaystyle
 \frac{\transition{P}{\alpha}{P'}}{\transition{P + Q}{\alpha}{P'}}
  \\[4pt]
  \AxiomC{$P \transition{}{a(x)}{}P'$} \AxiomC{$Q\transition{}{\overline{a}(x)}{} Q'$}
	\LeftLabel{(Par-Com)}
	\BinaryInfC{$P\mid Q \transition{}{\tau}{} P'\mid Q'$}
	\DisplayProof &
  \quad \mbox{(Par-Act)} \displaystyle
 \frac{\transition{P}{\alpha}{P'}}{\transition{P\mid Q}{\alpha}{P'\mid Q}} 
 \\[4pt]
 \mbox{(Rel)} \displaystyle\frac{\transition{P}{\alpha}{P'}}{\transition{P[f]}{f(\alpha)}{P'[f]}}
 &
  \quad \mbox{(Const)} \displaystyle \displaystyle \frac{\transition{P}{\alpha}{P'}}{\transition{A}{\alpha}{P'}} \quad (A\dff P)
 \\[4pt]

   \AxiomC{$P\equiv Q \quad \transition{Q}{l}{Q'}\quad Q' \equiv P' $}
	\LeftLabel{(Struct)}
	\UnaryInfC{$\transition{P}{l}{P'}$}
	\DisplayProof
\\\\	
%	\multicolumn{2}{l}{
%	\AxiomC{$\transition{P}{\alpha}{P'}$}
%	\LeftLabel{(Global-Com)}
%	\RightLabel{ $( \cond (\alpha=a(x) \; or \; \alpha = \overline{a}(x)) \; then \;a \in A)$}
%	\UnaryInfC{$\transition{m_{A}[P]}{\alpha}{m_{A}[P']}$}
%	\DisplayProof 	
%	} & 
%	\\\\
\end{array}
$
\label{tab:sosCom}
\end{table}

\end{appendices}
\end{document}